\tikzstyle{decision} = [diamond, draw, fill=blue!20, 
\tikzstyle{block} = [rectangle, draw, fill=blue!20, 
\tikzstyle{line} = [draw, -latex']
\tikzstyle{cloud} = [draw, ellipse,fill=red!20, node distance=3cm,
\def \cite#1{\citep{#1}}
\def \shortcite#1{(\citeyear{#1})}
\def\clingcon{{\sc clingcon}\xspace}
\def\ezcsp{{\sc ezcsp}\xspace}
\def\zThree{{\sc z3}\xspace}
\def\dingo{{\sc dingo}\xspace}
\def\mingo{{\sc mingo}\xspace}
\def\lr{\hbox{lr}\xspace}
\def\bN{\mathbb{N}}
\newcommand{\exref}[1]{Example \ref{#1}}
\newtheorem{example}{Example}
\newtheorem{lemma}{Lemma}
\renewcommand{\paragraph}[1]{
	
	\medskip
	\noindent 
	{\bf #1}~}
\def\beq{\begin{equation}}
\def\eeq#1{\label{#1}\end{equation}}
\def\ba{\begin{array}}
\def\ea{\end{array}}
\def\<{\langle}
\def\>{\rangle}
\def\bs{\backslash}
\def\F{\mathcal{F}}
\def\cB{\ensuremath{\mathcal{B}}}
\def\cF{\ensuremath{\mathcal{F}}}
\def\cL{\ensuremath{\mathcal{L}}}
\def\cI{\ensuremath{\mathcal{I}}}
\def\cR{\mathcal{R}}
\def\bR{\mathbb{R}}
\def\cA{\mathcal{A}}
\def\bZ{\mathbb{Z}}
\newcommand{\C}{\mathcal{C}}
\newcommand{\ignore}[1]{}
\newcommand{\var}[1]{\ensuremath{{#1}_{|v}}}
\newcommand{\rels}[1]{\ensuremath{{#1}_{|r}}}
\newcommand{\fsym}[1]{\ensuremath{{#1}_{|f}}}
\def\true{\mathit{true}}
\def\false{\mathit{false}}
\def\ar{\leftarrow}
\def\rar{\rightarrow}
\def\beq{\begin{equation}}
\def\eeq#1{\label{#1}\end{equation}}
\newcommand{\At}{\mathit{At}}
\newcommand{\hd}{\mathit{hd}}
\newtheorem{definition}{Definition} 
\newtheorem{theorem}{Theorem} 
\setlist[itemize]{nosep}
\begin{document}
	%

\setlength{\abovedisplayskip}{10.0pt plus 2.0pt minus 5.0pt}
\setlength{\belowdisplayskip}{10.0pt plus 2.0pt minus 5.0pt}
\setlength{\abovedisplayshortskip}{0.0pt plus 3.0pt}
\setlength{\belowdisplayshortskip}{6.0pt plus 3.0pt minus 3.0pt}

\newcommand{\tcb}[1]{\textcolor{blue}{#1}}

\title[On Relation between CASP and SMT]{On Relation between Constraint Answer Set Programming and Satisfiability Modulo Theories\footnote{This is an extended version of the paper that appeared at IJCAI-2016~\cite{sus16}.}}

\author[Yuliya Lierler and Benjamin Susman]{YULIYA LIERLER and BENJAMIN SUSMAN \\
Department of Computer Science, University of Nebraska at Omaha \\
Omaha, Nebraska 68182, USA \\
\email{ylierler@unomaha.edu, bsusman@unomaha.edu}
}


\maketitle

\label{firstpage}

\begin{abstract}
Constraint answer set programming is a promising research direction that integrates answer set programming with constraint processing. It is often informally related to the field of satisfiability modulo theories. Yet, the exact formal link is obscured as the terminology and concepts used in these two research areas differ. In this paper, we connect these two research  areas by uncovering the precise formal relation between them. We believe  that this work will booster the cross-fertilization of  the theoretical foundations and the existing solving methods in both areas. As a step in this direction we provide a translation from constraint answer set programs with integer linear constraints to satisfiability modulo linear integer arithmetic that paves the way to utilizing modern satisfiability modulo theories solvers for computing answer sets of constraint answer set programs. \\Under consideration in Theory and Practice of Logic Programming (TPLP).

\bigskip
\noindent
{\em KEYWORDS:}
constraint answer set programming, constraint satisfaction processing, satisfiability modulo theories

\end{abstract}

\section{Introduction}

 Constraint answer set programming (CASP)~\cite{elk04,Mellarkod2009,geb09,bal09a,lier14} is a promising 
research direction that integrates answer set programming, a powerful knowledge representation paradigm, with constraint processing.   
Typical answer set programming tools start their computation with grounding, a process that substitutes variables for passing constants in respective domains. Large domains often form an obstacle for classical answer set programming. 
CASP enables a mechanism to model constraints over large domains so that they are processed in a non-typical way for answer set programming tools by delegating their solving  to constraint solver systems specifically designed to handle large and sometimes infinite domains.
CASP solvers including 
{\sc clingcon}
~\cite{geb09} and 
{\sc ezcsp} 
~\cite{bal09a} already put  CASP on the map of efficient automated reasoning tools.

Constraint answer set programming often cites itself as a related initiative to satisfiability modulo theories~(SMT) solving~\cite{BarTin-14}. Yet, the exact link is obscured as the terminology and concepts used in both fields differ.
To add to the complexity of the  picture several 
answer set programming modulo theories formalisms have been proposed. For instance,
Liu~et al.~\shortcite{liu12},  
Janhunen~et al.~\shortcite{jan11}, and
Lee and Meng~\shortcite{lee13}
introduced logic programs modulo linear constraints,
logic programs modulo difference constraints, and ASPMT programs respectively. Acyclicity programs~\cite{bom15} (or logic programs modulo acyclicity constraints) form another recently investigated formalism that parallels satisfiability modulo graphs framework developed within SMT~\cite{geb14}. 

This work attempts to unify the terminology used in CASP and SMT so that the differences and similarities of logic programs with constraints versus logic programs modulo theories become apparent. At the same time, we introduce the notion of constraint formulas, which is similar to that of logic programs with constraints. 
We identify a special class of SMT theories that we call ``uniform''. Commonly used theories in satisfiability modulo solving such as  integer linear, difference logic, and linear arithmetics
belong to uniform theories.
This class of theories helps us to establish precise links (i)~between CASP and SMT, and (ii) between  constraint formulas and SMT formulas.
We are able to then provide a formal description relating a family of distinct constraint answer set programming formalisms.

We show that this unified outlook allows us not only to better understand the landscape of CASP languages and systems, but also to foster new ideas for the design of CASP solvers and possibly SMT solvers.
For example, theoretical results of this work establish a simple method for using SMT systems for computing answer sets of a broad class of tight constraint answer set programs. \citeauthor{sus16b}~\shortcite{sus16b} utilized this method in implementing an SMT-based solver for such programs. 
In the conclusion of this work, we rely on the concept of level ranking by~\citeauthor{nie08}~\shortcite{nie08} to develop a translation for nontight constraint answer set programs to SMT formulas so that an SMT solver can be used to compute answer sets of such programs. 

\paragraph{Paper Outline}  Section~\ref{sec:intro} is on preliminaries. It  reviews concepts of logic programs, completion,  (input) answer sets, and level ranking. Section~\ref{sec:casp} presents the details of  generalized constraint satisfaction problem and links this notion to classical constraint satisfaction. The section on preliminaries concludes with formal definitions of linear and integer linear constraints. Section~\ref{sec:caspcf} introduces
constraint answer set programs and  constraint formulas. 
Next, in Section~\ref{sec:smt} we present satisfiability modulo theories 
and specify a  class of uniform theories. 
Section~\ref{sec:smtformulas} defines 
 SMT formulas and ASPT programs. Uniform theories provide  us with a ground to establish a formal link between CASP and SMT in Section \ref{sec:relation} by relating 
 SMT formulas with constraint formulas and ASPT programs with constraint answer set programs.
Section \ref{sec:relation}  concludes by characterizing a family of distinct constraint answer set programming formalisms using the uniform terminology proposed in this work. Section~\ref{sec:nontight} utilizes the generalization of level ranking to propose a method of using SMT solvers for computing answer sets of constraint answer set programs with integer constraints. Finally, we list the conclusions.

\section{Preliminaries}\label{sec:intro}
This  section starts by reviewing logic programs and 
the concept of an answer set. It also introduces programs' completion. 
Next, the generalized constraint satisfaction problems are introduced and related to the classical constraint satisfaction problems studied in artificial intelligence.

\subsection{Logic Programs and Completion}
\paragraph{Syntax} 
A {\em vocabulary} is a set of propositional symbols also called atoms.
As customary, a {\em literal} is an atom~$a$ or its negation, denoted~$\neg a$.
A \emph{(propositional) logic program}, denoted by~$\Pi$, over vocabulary~$\sigma$  is a 
set of \emph{rules} of the form
\begin{equation}\label{e:rule}
\begin{array}{l}
a\ar b_1,\ldots, b_\ell,\ not\  b_{\ell+1},\ldots,\ not\  b_m,\ 
\ not\  \ not\  b_{m+1},\ldots,\ not\  \ not\  b_n
\end{array}
\end{equation}
where $a$ is an atom over $\sigma$ or $\bot$, and each $b_i$, $1\leq i\leq n$, 
is an atom in $\sigma$.
We will sometimes use the abbreviated form for  rule~\eqref{e:rule}
\begin{equation}\label{e:rulea:b}
\begin{array}{l}
a\ar B\ 
\end{array}
\end{equation}
where $B$ stands for $b_1,\ldots, b_\ell,\ not\  b_{\ell+1},\ldots,\ not\  b_m,\ not\  \ not\  b_{m+1},\ldots,\ not\  \ not\  b_n$ and is also called a {\em body}. Syntactically, we identify rule~\eqref{e:rule} with the propositional formula
\begin{equation}\label{e:prop-formula}
b_1\wedge\ldots\wedge b_\ell\wedge \neg  b_{\ell+1} \wedge\ldots\wedge\neg  b_m \wedge \neg\neg b_{m+1} \wedge\ldots\wedge \neg\neg  b_n \rightarrow a
\end{equation}
and $B$ with the propositional formula
\begin{equation}\label{e:body-formula}
b_1\wedge\ldots\wedge b_\ell\wedge \neg  b_{\ell+1} \wedge\ldots\wedge\neg  b_m \wedge \neg\neg b_{m+1} \wedge\ldots\wedge \neg\neg  b_n.
\end{equation}
 Note that (i) the order of terms in~\eqref{e:body-formula} is immaterial, (ii) {\emph{not} is replaced with classical negation ($\neg$)}, and (iii) comma is replaced with conjunction ($\wedge$).  
 Expression $$b_1\wedge\ldots\wedge b_\ell$$ in formula~\eqref{e:body-formula} is referred to as the {\em positive} part of the body and the remainder of~\eqref{e:body-formula} as the {\em negative} part of the body. Sometimes, we interpret semantically  rule~\eqref{e:rule} and its body as propositional formulas, in these cases it is obvious that double negation $\neg\neg$ in~\eqref{e:prop-formula} and~\eqref{e:body-formula}  can be dropped.


The expression $a$ is the \emph{head} of the rule. When $a$ is $\bot$, we often omit it and say that the head is empty. 
We write $\hd(\Pi)$ for the set of nonempty heads of rules in~$\Pi$.

We call a rule whose body is empty a {\em fact}. In such cases, we drop the arrow. We sometimes may identify a set $X$ of atoms with the set of facts $\{a. \mid a \in X\}$.

\paragraph{Semantics} 
 We say a set~$X$ of atoms {\em satisfies} rule~\eqref{e:rule}, if~$X$ satisfies the propositional formula~\eqref{e:prop-formula}, where we identify $X$ with an assignment over the atoms in ~\eqref{e:prop-formula} in a natural way: 
 \begin{itemize}
 \item  any atom that occurs in $X$ maps to truth value $\true$ and
 \item  any atom in~\eqref{e:prop-formula} but not in $X$ maps to truth value $\false$.
 \end{itemize}
 We say~$X$ satisfies a program~$\Pi$, if~$X$ satisfies every rule in~$\Pi$. In this case, we also say that $X$ is a model of $\Pi$. We may abbreviate satisfaction relation with symbol $\models$ (to denote that a set of atoms satisfies a rule or a program or a formula).

The {\sl reduct} $\Pi^X$ of a program $\Pi$ relative to a set $X$ of atoms is 
obtained by first removing all rules~\eqref{e:rule} such that $X$ does not satisfy negative part of the body~$\neg  b_{\ell+1} \wedge\ldots\wedge\neg  b_m \wedge \neg\neg b_{m+1} \wedge\ldots\wedge \neg\neg  b_n$, and replacing all remaining rules with~$a\ar b_1,\ldots, b_\ell$. 
A set~$X$ of atoms is an {\em answer~set}, if it is the minimal set that satisfies all rules of $\Pi^X$~\cite{lif99d}.

Ferraris and Lifschitz~\shortcite{fer05b} showed that a choice rule $\{a\} \ar B$ can be seen as an abbreviation for a rule $a \ar\ not\ not\ a, B$ (choice rules were introduced by \citeauthor{nie00} \shortcite{nie00} and are commonly used in answer set programming languages). We adopt this abbreviation in the rest of the paper.
 
\begin{example}\label{ex:acp} Consider the logic program  from Balduccini and Lierler~\shortcite{lierbal16}:
\begin{equation}\label{eq:acp}
\ba l
  \{switch\}.\\
  lightOn\ar\ switch, not\ am.\\
  \ar not\ lightOn. \\
  \{am\}.\\
\ea
\end{equation}
Each rule in the program can be understood as follows:
\begin{itemize}
\item The action {\em switch} is exogenous.
\item The light is on ({\em lightOn}) during the night ({\em not am}) when the action {\em switch} has occurred.
\item The light must be on.
\item It is night ({\em not am}) or morning ({\em am})
\end{itemize}
Choice rules $\{switch\}.$ and 
$\{am\}.$ in program~\eqref{eq:acp}
 abbreviate rules 
$$\ba{l}
switch\ar not\ not\ switch.\\
am\ar not\ not\ am.
\ea
$$
respectively. Consider set $\{switch, \ lightOn\}$ of atoms. 
The reduct of program~\eqref{eq:acp} relative to this set follows:
$$
\ba l
  switch. \\
  lightOn\ar\ switch.\\
\ea
$$
It is easy to see that set $\{switch, \ lightOn\}$ satisfies every rule of the reduct. Furthermore, this set is minimal among sets with this property. Thus, it is an answer set of program~\eqref{eq:acp}.
In fact, it is the only answer set of this program.
This answer set suggests that the only situation that satisfies the specifications of the problem is such that (i) it is currently night, (ii) the light has been switched on, and (iii) the light is on.
\end{example}

\paragraph{Completion}
It is customary for a given vocabulary $\sigma$, to identify a set $X$ of atoms over $\sigma$ with (i) a complete and consistent set of literals over $\sigma$ constructed as $X\cup\{\neg a \mid a\in\sigma\setminus X\}$, and respectively with (ii)~an assignment function or interpretation that assigns truth value $\true$ to every atom in~$X$ and~$\false$ to every atom in $\sigma\setminus X$. 

By $Bodies(\Pi,a)$ we denote the set of the bodies of all rules of~$\Pi$ with head~$a$.
For a program $\Pi$ over vocabulary~$\sigma$, the {\sl completion} of~$\Pi$~\cite{cla78}, denoted by $Comp(\Pi)$, is  the set of classical formulas that consists of the rules~\eqref{e:rule}  in $\Pi$ (recall that we identify rule~\eqref{e:rule} with  implication~\eqref{e:prop-formula})  and the implications
\beq
a\rar \bigvee_{a\ar B\in \Pi} B
\eeq{eq:comp2}
for all atoms $a$ in $\sigma$.
When set $Bodies(\Pi,a)$ is empty, the implication~\eqref{eq:comp2} has the form $a\rar\bot$.
When a rule~\eqref{e:rulea:b} is a fact $~a.~$, then we identify this rule  with the  clause consisting of a single atom $a$.

\begin{example}\label{ex:asp-comp}
The completion of logic program~\eqref{eq:acp} consists of formulas
\begin{equation} \label{eq:acp-comp2}
\ba l
  \neg \neg switch \rar switch, \\
  switch \wedge \neg am\rar lightOn,\\
  \neg lightOn \rar \bot,\\
  \neg \neg am \rar am,\\
  switch \rar \neg \neg switch, \\
  lightOn \rar switch \wedge \neg am, \\
  am \rar \neg \neg am.\\
\ea
\end{equation}
It is easy to see that this completion is equivalent to the set of formulas
\begin{equation}\label{eq:acp-comp}
\ba l
  lightOn \leftrightarrow switch \wedge \neg am, \\
  lightOn. \\
\ea
\end{equation}
The set $\{switch,lightOn\}$ is the only model of~\eqref{eq:acp-comp}. 
(Unless the signature of a formula is explicitly stated, we consider the set of atoms occurring in the formula to implicitly specify its signature.)
Note that set $\{switch,lightOn\}$ coincides with the only answer set of program~\eqref{eq:acp}. 
\end{example}

\paragraph{Tightness}
Any answer set of a program is also a model of its completion. The converse does not always hold.
Yet, for the large class of logic programs, called {\sl tight},  their answer sets coincide with models of their completion~\cite{fag94,erd01}.
Tightness is a syntactic condition on a program that can be verified by means of program's dependency  graph. 

The {\em dependency graph} of~$\Pi$ is the directed graph $G$ such that 
\begin{itemize}
\item the vertices of $G$ are the atoms occurring in~$\Pi$, and
\item for every rule~\eqref{e:rule} in~$\Pi$ whose head is not $\bot$, $G$ has an edge from atom $a$ to each atom in positive part $b_1,\dots, b_\ell$ of its body.
\end{itemize}
A program is called {\em tight} if its dependency graph is acyclic.

For example, the dependence graph of 
program~\eqref{eq:acp} consists of three nodes, namely, $am$, $switch$, and $lightOn$ and a single edge from $lightOn$ to $switch$. This program is obviously tight.

\paragraph{Level Rankings}
\citeauthor{nie08}~\shortcite{nie08} characterized answer sets of "normal" logic programs  in terms of "level rankings".
Normal programs  consist of rules of the form~\eqref{e:rule}, where $n=m$ and $a$ is an atom. 
Thus, such constructs as choice rules and so-called denials (rules with  empty head) are not covered by normal programs.
We generalize the concept of level ranking to  programs considered in this paper that are more general than normal ones.

We start by introducing some notation. 
By  $\mathbb{N}$ we denote the set of natural numbers.
For a rule~\eqref{e:rulea:b}, by $B^+$ we denote  its  positive part  and sometimes identify it with the set of atoms that occur in it, i.e., $\{b_1,\dots, b_l\}$ (recall that $B$ in~\eqref{e:rulea:b} stands for the right hand side of the arrow in rule \eqref{e:rule}).
For a program~$\Pi$, by $\At(\Pi)$ we denote the set of atoms  occurring in it.


\begin{definition}\label{def:lr}
	For a logic program $\Pi$ and a set $X$  of atoms over $\At(\Pi)$, a function \lr: $X \rightarrow \mathbb{N}$ is a \emph{level-ranking} of $X$ for $\Pi$ when for each $a \in X$, there is $B$ in $Bodies(\Pi,a)$ such that~$X$ satisfies~$B$  and 
	 for every $b \in B^+$ it holds that $\lr(a) - 1 \geq \lr(b)$. 
\end{definition}
\citeauthor{nie08}~\shortcite{nie08} observed that for an arbitrary normal logic programs, a model $X$ of its completion is also an answer set for this program when there is a level ranking of $X$ for the program. We generalize this result beyond normal programs.

\begin{theorem}\label{thm:ans-iff-lr}
		For a program $\Pi$ and a set $X$ of atoms that is a model of its completion $Comp(\Pi)$, $X$ is an answer set 
	of $\Pi$ if and only if  there is a 
	level ranking of $X$ for~$\Pi$. 
\end{theorem}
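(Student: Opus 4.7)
The plan is to prove both directions by exploiting the fact that the reduct $\Pi^X$, once we set aside rules with empty head, is a positive (definite) program over the atoms in $X$, so its minimal model coincides with the fixpoint $\bigcup_{n\geq 0} T^n(\emptyset)$ of the immediate consequence operator $T$ of $\Pi^X$. A level ranking will essentially encode this finite stage at which each atom in $X$ is derived. Throughout, I rely on the standard facts that $X\models\Pi$ iff $X\models\Pi^X$, and that the reduct contains a rule $a\ar b_1,\dots,b_\ell$ precisely when there is a rule~\eqref{e:rule} in $\Pi$ with positive part $b_1,\dots,b_\ell$ whose negative part is satisfied by~$X$.

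For the direction from answer set to level ranking, assume $X$ is an answer set of $\Pi$, so $X$ is the minimal model of $\Pi^X$ and $X=\bigcup_{n\geq 0} T^n(\emptyset)$, where $T$ is the consequence operator applied only to the rules of $\Pi^X$ whose head is an atom (denial rules contribute nothing). Define $\lr(a)$ for each $a\in X$ to be the least $n$ with $a\in T^n(\emptyset)$. Any witnessing rule $a\ar b_1,\dots,b_\ell$ in $\Pi^X$ arises from a rule $a\ar B$ of $\Pi$ whose negative part is satisfied by $X$ and whose positive part $B^+$ equals $\{b_1,\dots,b_\ell\}$, with each $b_i\in T^{\lr(a)-1}(\emptyset)$; in particular $\lr(b_i)\leq \lr(a)-1$. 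Since each $b_i\in X$, we have $X\models B^+$, and together with satisfaction of the negative part this gives $X\models B$. Hence $\lr$ is a level ranking of $X$ for $\Pi$.

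For the converse, assume $X\models Comp(\Pi)$ and $\lr$ is a level ranking of $X$ for $\Pi$. Since $Comp(\Pi)$ entails the rules of $\Pi$, $X\models\Pi$ and therefore $X\models\Pi^X$. Suppose for contradiction that some $Y\subsetneq X$ also satisfies $\Pi^X$, and choose $a\in X\setminus Y$ with $\lr(a)$ minimal. By the level-ranking property, there exists $B\in Bodies(\Pi,a)$ with $X\models B$ and $\lr(b)<\lr(a)$ for all $b\in B^+$. Because $X$ satisfies the negative part of $B$, the rule $a\ar B^+$ lies in $\Pi^X$. For each $b\in B^+$ we have $b\in X$ (as $\lr(b)$ is defined) and $\lr(b)<\lr(a)$, so by minimality of $\lr(a)$ on $X\setminus Y$ we get $b\in Y$; thus $Y\supseteq B^+$. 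Since $Y\models\Pi^X$ and $Y$ satisfies the body $B^+$ of the rule $a\ar B^+$, it must be that $a\in Y$, contradicting $a\in X\setminus Y$. Hence $X$ is the minimal model of $\Pi^X$, i.e., an answer set of $\Pi$.

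The main obstacle compared with Niemelä's original argument for normal programs is accommodating the extra syntactic features allowed here: double negation in bodies (introduced by the choice-rule abbreviation) and empty-headed denials. Double negation appears only inside negative parts of bodies, so the reduct construction and the extraction of $B^+$ behave exactly as in the normal case, and choice rules legitimately provide self-supporting bodies whose positive parts do not include the head. Denials contribute rules $\bot\ar b_1,\dots,b_\ell$ to $\Pi^X$; these are automatically preserved when passing from $X$ to any $Y\subseteq X$ (since $Y$ can only make fewer positive parts true than $X$ does) and they never appear as witnesses in level ranking, so they are harmless on both sides of the equivalence.
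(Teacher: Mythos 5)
Your proof is correct and follows essentially the same route as the paper's: the left-to-right direction extracts the level ranking from the stages of the immediate consequence operator on the reduct $\Pi^X$, and the right-to-left direction derives a contradiction from a smaller model $Y\subsetneq X$ by picking $a\in X\setminus Y$ of minimal rank and using the witnessing rule $a\ar B^+$ in $\Pi^X$. Your explicit remarks on why double negation and denials are harmless are a welcome addition that the paper leaves implicit, but they do not change the argument.
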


\begin{proof}
The proof largely follows the lines of the proof of Theorem~1 from
\citeauthor{nie08}~\shortcite{nie08} but utilizes the terminology used in this paper.
We start by defining an operator $T_\Pi(I)$ for a program $\Pi$ and a set $I$ over $\At(\Pi)\cup\{\bot\}$ as follows: $$T_\Pi(I)=\{a \mid a\ar B \in \Pi, I \hbox{ satisfies } B \}.$$ For this operator we define
	\[T_\Pi\uparrow0 = \emptyset\]
	and for $i=0,1,2,\dots$
	\[ T_\Pi\uparrow(i+1) =T_\Pi(T_\Pi\uparrow i) \]

	Left-to-right:
	Assume $X$ is an answer set of $\Pi$. We can construct a level ranking \lr  of~$X$ for $\Pi$ using the $T_{\Pi^X}(\cdot)$ operator.
	As $X$ is an answer set of $\Pi$, we know that $X = T_{\Pi^X}\uparrow \omega$ and for each $a \in X$ there is a unique $i$ such that $a \in T_{\Pi^X}\uparrow i$,  but $a \not \in T_{\Pi^X}\uparrow(i-1)$. We consider $\lr(a)=i$. We now illustrate that $\lr$ is indeed a level ranking. For $a \in X$ there is a rule $a\ar B$ of the form (\ref{e:rule}) such that $a\ar b_1,\ldots, b_\ell \in \Pi^X$ and $T_{\Pi^X}\uparrow(i-1)$ satisfies $b_1\wedge \ldots\wedge b_\ell$. Consequently, for every $b_j$ in  $\{b_1,\ldots, b_\ell\}$, 
		$lr(b_j) \leq i -1$. Thus, $\text{lr}(a) -1 \geq \text{lr}(b_j)$.
Also, from the way the reduct is constructed it follows that $X$ satisfies body $B$  of rule $a\ar B$.

		Right-to-left:
	Assume that there is a level ranking \lr of $X$ for $\Pi$.  We show that then $X$ is the minimal model of $\Pi^X$, which implies that $X$ is an answer set of $\Pi$. Since $X$ is a model of the completion of $\Pi$ it follows that $X$ is also a model of $\Pi$. Indeed, recall the construction process of completion. From the construction of $\Pi^X$ it follows that  $X$ is also a model of  $\Pi^X$. Proof by contradiction. 
	Assume that there is a set of atoms $X' \subset X$ such that $X'$ is a model of~$\Pi^X$ and hence $X$ is not a minimal model of $\Pi^X$. Consider now an atom $a \in X\setminus X'$ with the smallest level ranking $\lr(a)$. 
	Since  \lr is a level ranking of $X$ for~$\Pi$, it follows that there is a rule $a\ar B$ in $\Pi$ such that  $X \models B$ and for every $b \in B^+ $, it holds that  $\lr(b) < \lr(a)$. 
	As we considered atom $a$ in $X\setminus X'$ with the smallest level ranking it follows that $b\in X'$. By $\Pi^X$ construction, rule $a\ar B^+$ belongs to $\Pi^X$. We derive that~$X'$ satisfies $B^+$, but does not contain $a$. This contradicts out assumption that~$X'$ satisfies~$\Pi^X$ as it does not satisfy rule $a\ar B^+$.
\end{proof}


\subsection{Generalized Constraint Satisfaction Problems}\label{sec:casp}

In this section we present a primitive constraint as defined by Marriott and Stuckey~(\citeyear[Section 1.1]{mar98}). We  refer to this concept as a constraint, dropping the word ``primitive''. We use constraints to define  a generalized constraint satisfaction problem that Marriott and Stuckey refer to  as ``constraint''. 
We then review constraint satisfaction problems as commonly defined in artificial intelligence literature and illustrate that they form a  special case of  generalized constraint satisfaction problems. We finally introduce linear constraints and linear constraint satisfaction problems.

\paragraph{Signature, c-vocabulary, constraint atoms}
We adopt the following convention: for a function~$\nu$ and an element~$x$, by $x^\nu$ we denote 
the value that function $\nu$ maps~$x$ to, in other words, $x^\nu=\nu(x)$.
A {\em domain} is a {\em nonempty} set of elements (or values).
A {\em signature}~$\Sigma$ is a set of 
 {\em variables}, 
{\em predicate symbols}, 
and 
 {\em function symbols (or f-symbols)}.
Predicate and function symbols are associated with a positive integer called {\em arity}.
By $\var{\Sigma}$, $\rels{\Sigma}$, and~$\fsym{\Sigma}$  we denote the subsets of $\Sigma$ that contain all variables, all predicate symbols, and all f-symbols respectively.

For instance, we can define  signature 
$\Sigma_1=\{s,r,E,Q\}$
by saying that $s$ and $r$ are variables,~$E$ is a predicate symbol of arity $1$, and~$Q$ is a predicate symbol of arity $2$. 
Then, $\var{{\Sigma_1}}=\{s,r\}$, $\rels{{\Sigma_1}}=\{E,Q\}$, $\fsym{{\Sigma_1}}=\emptyset$.

Let $D$ be a domain.
For a 
  set $V$ of variables, 
we call a total function $\nu:V\rightarrow D$ a \hbox{{\em $[V,D]$ valuation}}.
For a set~$R$ of predicate symbols,
we call a total function on~$R$ {\em an $[R,D]$ r-denotation}, when
it maps each $n$-ary predicate symbol of $R$ into an $n$-ary relation on $D$.
For a set $F$ of f-symbols,
we call a total function on $F$ {\em an $[F,D]$ f-denotation}, when
it maps each $n$-ary f-symbol of $F$ into a function $D^{n}\rightarrow D$. 

Table~\ref{tbl:cspDefinitions} presents sample definitions of  a domain, valuations, and r-denotations. In the remainder of the paper we frequently refer to these sample   valuations, and r-denotations.

\begin{table}[h]
\caption{Example definitions for signature, valuation, and r-denotation}\label{tbl:cspDefinitions}
\vspace{-2mm}
 \begin{tabular}{ll}
 \hline \hline 
 $\Sigma_1$ ~~~~~ & $\{s,r,E,Q\}$\\
 $D_1$ & $\{1,2,3\}$\\
 $\nu_1$ &  $[\var{\Sigma_1},D_1]$ valuation, where $s^{\nu_1}=1$ and $r^{\nu_1}=1$\\
$\nu_2$ & $[\var{\Sigma_1},D_1]$ valuation, where $s^{\nu_2}=2$ and $r^{\nu_2}=1$\\
$\rho_1$ & $[\rels{\Sigma_1},D_1]$ r-denotation, where
$
E^{\rho_1}=\{\langle 1 \rangle\},~~~~~~~
Q^{\rho_1}=\{\langle 1,1 \rangle,\langle 2,2 \rangle,\langle 3,3 \rangle\}$\\
$\rho_2$ &  $[\rels{\Sigma_1},D_1]$ r-denotation, where
$
E^{\rho_2} = \{\langle 2 \rangle,\langle 3 \rangle\}$, $Q^{\rho_2}=Q^{\rho_1}$.\\
\hline \hline
 \end{tabular}
\end{table}

A {\em constraint vocabulary (c-vocabulary)} is a pair $[\Sigma,D]$, where~$\Sigma$ is a signature and~$D$ is a domain.
A {\em term} over  a c-vocabulary~$[\Sigma,D]$ is either 
\begin{itemize}
	\item a variable in $\var{\Sigma}$,
	\item a domain element in $D$, or  
	\item an expression $f(t_1,\dots,t_n)$, where $f$ is an f-symbol of arity $n$ in  $\fsym{\Sigma}$ 
and $t_1,\dots,t_n$ are terms over $[\Sigma,D].$
\end{itemize}
A {\em  constraint atom} over a c-vocabulary $[\Sigma,D]$  is 
 an expression 
 \beq
 P(t_1,\dots,t_n),
 \eeq{eq:catom}
 where $P$ is a predicate symbol from $\rels{\Sigma}$ of arity~$n$  and~$t_1,\dots,t_n$ are terms over~$[\Sigma,D]$. 
 A {\em  constraint literal} over a c-vocabulary $[\Sigma,D]$  is either a constraint atom~\eqref{eq:catom} or
 an expression 
 \beq
 \neg P(t_1,\dots,t_n),
 \eeq{eq:clit}
 where $P(t_1,\dots,t_n)$ is a constraint atom over $[\Sigma,D]$.
 
For instance, expressions
$$\neg E(s),~~ \neg E(2),~~ Q(r,s)$$
are constraint literals over c-vocabulary $[\Sigma_1,D_1]$, where $\Sigma_1$ and $D_1$ are defined in Table~\ref{tbl:cspDefinitions}.

It is due to notice that syntactically, constraint literals are similar to {\em ground} literals of predicate logic. (In predicate logic, variables as defined here are referred to as {\em object constants} or {\em function symbols of  arity $0$}.) The only difference is that here domain elements are allowed to form a term. For instance, an expression
$E(2)$ is a
 constraint atom over~$[\Sigma_1,D_1]$, where $2$ is a term formed from a domain element. 
 In predicate logic, domain elements are not part of a signature over which atoms are formed.

We now proceed to introducing satisfaction relation  for constraint literals. Let  $[\Sigma,D]$ be a   c-vocabulary,
$\nu$ be a $[\var{\Sigma},D]$  valuation,
$\rho$ be a $[\rels{\Sigma},D]$ r-denotation,
and~$\phi$ be a $[\fsym{\Sigma},D]$ f-denotation. 
First, we  define recursively  
 a value that valuation $\nu$ assigns to a term $\tau$ over $[\Sigma,D]$ with respect to $\phi$.
 We denote this value by~$\tau^{\nu,\phi}$ and compute it  as follows:
 \begin{itemize}
 	\item for a term that is a variable $x$ in $\var{\Sigma}$, $x^{\nu,\phi}=x^\nu$,
 	\item for a term that is a domain element $d$ in $D$, $d^{\nu,\phi}$ is $d$ itself,
  	\item for a term $\tau$ of the form~$f(t_1,\dots,t_n)$, $\tau^{\nu,\phi}$ is defined recursively by the formula
  \[
  f(t_1,\dots,t_n)^{\nu,\phi}=f^{\phi}(t_1^{\nu,\phi},\dots,t_n^{\nu,\phi}).
  \]
 \end{itemize}
Second, we define what it means for valuation to be a solution of a constraint literal with respect to given r- and f-denotations.
We say that $\nu$ {\em satisfies (is a solution to)} constraint literal~\eqref{eq:catom} over~$[\Sigma,D]$ {\em with respect to~$\rho$ and~$\phi$} when
$\langle  t_1^{\nu,\phi},\dots,t_n^{\nu,\phi}\rangle\in P^{\rho}$.
Let $\cR$ be an n-ary relation on~$D$. By $\overline{\cR}$ we denote \emph{complement}  relation of $\cR$ constructed as $D^n\setminus \cR$.
Valuation $\nu$ {\em satisfies (is a solution to)} constraint literal of the form~\eqref{eq:clit}
{\em with respect to~$\rho$ and~$\phi$} when
$\langle  t_1^{\nu,\phi},\dots,t_n^{\nu,\phi}\rangle\in \overline{P^{\rho}}.$
 
For instance, consider declarations  of valuations~$\nu_1$ and $\nu_2$, and r-denotations~$\rho_1$ and~$\rho_2$
in Table \ref{tbl:cspDefinitions}. Valuation~$\nu_1$ satisfies constraint literal~\hbox{$Q(r,s)$}  with respect to~$\rho_1$, while valuation~$\nu_2$ does not satisfy this constraint literal with respect to~$\rho_2$. 
 
\paragraph{Lexicon, constraints, generalized constraint satisfaction problem}
We are now ready to define constraints, their syntax and semantics. 
To begin we  introduce a {\em lexicon}, which is a tuple 
 $([\Sigma,D], \rho, \phi)$, where $[\Sigma,D]$  is a c-vocabulary, $\rho$
 is a $[\rels{\Sigma},D]$ r-denotation, and $\phi$ is a $[\fsym{\Sigma},D]$ f-denotation.
 For a lexicon $\cL=([\Sigma,D], \rho, \phi)$,
 we call any function that is $[\var{\Sigma},D]$  valuation, a {\em valuation over} $\cL$.  
 We omit the last element of the lexicon tuple if the signature $\Sigma$ of the lexicon contains no f-symbols.
 A {\em constraint} is defined over lexicon $\cL=([\Sigma,D], \rho, \phi)$. 
 Syntactically, it is a constraint literal  over $[\Sigma,D]$ (lexicon $\cL$, respectively).  
 Semantically, we say that   valuation $\nu$ over~$\cL$
 {\em satisfies (is a solution to)}  the constraint $c$ when  $\nu$ satisfies~$c$
 with respect to~$\rho$ and~$\phi$.

For instance, Table~\ref{tbl:sampleLexiconsConstraints} presents definitions of  sample lexicons $\cL_1$, $\cL_2$, and  constraints $c_1$, $c_2$, $c_3$, and~$c_4$ using the earlier declarations from Table~\ref{tbl:cspDefinitions}.
Valuation $\nu_1$ from Table \ref{tbl:cspDefinitions} is a solution to $c_1$, $c_2$, $c_3$, but not a solution to~$c_4$. 
Valuation~$\nu_2$ from Table \ref{tbl:cspDefinitions} is not a solution to $c_1$,~$c_2$,~$c_3$, and $c_4$. In fact, constraint $c_4$ has no solutions.
 We sometimes omit the explicit mention of the lexicon  when talking about constraints:  we then may identify a constraint with its syntactic form of a constraint literal.
 
 \begin{table}[h]
 \caption{Sample lexicons and constraints}\label{tbl:sampleLexiconsConstraints}
  \begin{tabular}{ll}
  \hline \hline 
   $\cL_1$ & $([\Sigma_1,D_1],\rho_1)$\\
   $\cL_2$ & $([\Sigma_1,D_1],\rho_2)$\\
    $c_1$ ~~~~~~~~~~~& a literal~$Q(r,s)$ over lexicon $\cL_1$ \\
    $c_2$ & a literal~$Q(r,s)$ over lexicon $\cL_2$ \\
    $c_3$ & a literal $\neg E(s)$ over lexicon~$\cL_2$\\
   $c_4$ & a literal $\neg E(2)$ over lexicon~$\cL_2$.\\
  \hline \hline 
  \end{tabular}
 \end{table}


\begin{definition}
A {\em generalized constraint satisfaction problem~(GCSP)} $\C$ is a finite set of constraints 
  over a lexicon~$\cL=([\Sigma,D],\rho,\phi)$. 
 We say that a valuation~$\nu$ over~$\cL$
 {\em satisfies (is a solution to)} the GCSP~$\C$ when~$\nu$ is a solution to every constraint in~$\C$.  
 \end{definition}
 
 	 For example, consider a set~$\{c_2, c_3, c_4\}$ of constraints. 
 	 Any subset of this set forms a GCSP, including subsets $\{c_2, c_3\}$ and~$\{c_2, c_3, c_4\}$.	
 	 Sample valuation $\nu_1$ over lexicon~$\cL_2$ (where $\nu_1$  stems from Tables~\ref{tbl:cspDefinitions}) satisfies the GCSP $\{c_2, c_3\}$, but does not satisfy the GCSP~$\{c_2, c_3, c_4\}$.

\paragraph{From GCSP to Constraint Satisfaction Problem}
We now define a constraint satisfaction problem (CSP) as customary in classical literature on artificial intelligence. We then explain in which sense generalized constraint satisfaction problems generalize CSPs. 

We say that a lexicon is {\em finite-domain} if it is defined over a c-vocabulary that refers to a domain whose set of elements is finite. Trivially, lexicons~$\cL_1$ and~$\cL_2$ defined in Table~\ref{tbl:sampleLexiconsConstraints} are finite-domain lexicons.
Consider a special case of a constraint of the form~\eqref{eq:catom} over finite-domain lexicon $\cL=([\Sigma,D],\rho)$, so that each $t_i$ is a variable. (For instance, constraints~$c_1$, $c_2$, and~$c_3$ satisfy the stated requirements, while $c_4$ does not.)
In this case, we can  identify constraint~\eqref{eq:catom} over  $\cL$ with the pair
\beq
 \langle(t_1,\dots,t_n),P^\rho\rangle.
\eeq{eq:constr}
A {\em constraint satisfaction problem} (CSP) is a set of pairs~\eqref{eq:constr}, where $\var{\Sigma}$ and $D$ of the finite-domain lexicon~$\cL$ are called the variables  and the domain of CSP, respectively.
Saying that  valuation~$\nu$ over~$\cL$ satisfies~\eqref{eq:catom} is the same as saying that~$\langle  t_1^{{\nu}},\dots,t_n^{{\nu}}\rangle\in P^{\rho}.$
The latter is the way in which  a solution to expressions~\eqref{eq:constr} in CSP is typically defined. As in the definition of semantics of GCSP, a valuation is a {\em solution} to a CSP problem $C$ when it is a solution to every pair~\eqref{eq:constr} in $C$.

In conclusion, GCSP generalizes  CSP by 
\begin{itemize}
	\item elevating the finite-domain restriction, and
	\item allowing us more elaborate syntactic expressions (e.g., recall f-symbols).
\end{itemize}

\subsubsection{Linear and Integer Linear Constraints}\label{sec:licon}
We now define ``numeric'' signatures and lexicons and
introduce a set of constraints referred to as linear, which are commonly used in practice.
A {\em numeric signature} is a signature that satisfies the following requirements
\begin{itemize}
	\item its only predicate symbols are $<$, $>$, $\leq$, $\geq$, $=$, $\neq$  of arity 2, and
	\item its only  f-symbols are $+$, $\times$ of arity $2$. 
\end{itemize}
We use the symbol $\cA$ to denote a  numeric signature.

Symbols $\bZ$  and $\bR$ denote the sets of integers and real numbers respectively. 
Let $\rho_\bZ$ and $\phi_\bZ$ be $[\{<,>,\leq,\geq,=,\neq\},\bZ]$ r-denotation and $[\{+,\times\},\bZ]$ f-denotation respectively, where they map their predicate and function  symbols into usual arithmetic  relations and operations over integers.
Similarly, $\rho_\bR$ and  $\phi_\bR$ denote $[\{<,>,\leq,\geq,=,\neq\},\bR]$ r-denotation and $[\{+,\times\},\bR]$ f-denotation respectively, defined over the reals. We can now define the following lexicons
\begin{itemize}
	\item an {\em integer lexicon} of the form $([\cA,\bZ],\rho_\bZ,\phi_\bZ)$,
	\item a {\em numeric lexicon} of the form $([\cA,\bR],\rho_\bR, \phi_\bR)$.
\end{itemize}

A {\em (numeric) linear expression} 
has the form
\beq
a_1 x_1 +\cdots + a_n x_n, 
\eeq{eq:exp}
where $a_1,\dots,a_n$ are real numbers and $x_1,\dots,x_n$ are  variables over real numbers. 
When $a_i=1$ ($1\leq i\leq n$) we may omit it from the expression.
We view expression~\eqref{eq:exp} as an abbreviation for the following term
\[
+(\times(a_1,x_1),+(\times(a_2,x_2),\dots +(\times(a_{n-1},x_{n-1}),\times(a_n,x_n))\dots),
\]
 over some c-vocabulary $[\cA,\bR]$, where $\cA$ contains $x_1,\dots,x_n$ as its variables.
For instance, $2 x_2+ 3 x_3$ is an abbreviation for the expression 
$+(\times(2,x_2),\times(3,x_3)).$

An {\em integer linear expression} has the form~\eqref{eq:exp},  where $a_1,\dots,a_n$ are integers, and $x_1,\dots,x_n$ are variables over integers. 

We call a constraint  {\em linear (integer linear)} when it is defined  over some numeric (integer) lexicon 
and
has the form
\beq
\bowtie(e,k)
\eeq{eq:lc}
where $e$ is a linear (integer linear) expression, $k$ is a real number (an integer), and $\bowtie$ belongs to $\{<,>,\leq,\geq,=,\neq\}$.  We can write~\eqref{eq:lc} as an expression in usual infix notation $e\bowtie k$.

We call a GCSP a {\em (integer) linear constraint satisfaction problem}  when it is composed of (integer) linear constraints. 
For instance, consider integer linear constraint satisfaction problem
composed of two constraints  $x>4$ and $x<5$ (here signature $\cA$ is implicitly defined by restricting its variable to contain $x$). 
It is easy to see that this problem has no solutions. 
On the other hand,  linear constraint satisfaction problem
composed of the same two constraints  $x>4$ and $x<5$  has an infinite number of solutions, including valuation that assigns $x$ to $4.1$.


\section{Constraint Answer Set Programs and Constraint Formulas}\label{sec:caspcf}
In this section we introduce constraint answer set programs,
which merge the concepts of logic programming and generalized constraint satisfaction problems.
We also present a similar concept of  ``constraint formulas'', which merges the concepts of propositional formulas and generalized constraint satisfaction problems.
First, we introduce input answer sets, followed by constraint answer set programs and input completion. Second, we present constraint formulas. 
Finally, we demonstrate the close connection between the constraint answer set programs and constraint formulas using the concept of input completion and tightness condition.



\subsection{Constraint Answer Set Programs}\label{sec:casp2}
We start by introducing  a concept in spirit of an input answer set by Lierler and Truszczynski~\shortcite{lt2011}.\footnote{Similar concepts to input answer sets have been noted by \citeauthor{gel96}~\shortcite{gel96}, \citeauthor{oik06}~\shortcite{oik06}, and \citeauthor{den07}~\shortcite{den07}).} In particular,
we consider input answer sets ``relative to input vocabularies''.
We then extend the definition of completion and restate the result by Erdem and Lifshitz~\shortcite{erd01} for the case of input answer sets. 
The concept of an input answer set is essential for introducing constraint answer set programs.  
Constraint answer set programs (and constraint formulas) are defined over two disjoint vocabularies so that atoms stemming from those vocabularies ``behave'' differently. Input answer set semantics allows us to account for these differences.

\begin{definition}
	\label{def:input-answer-set}	
For a logic program $\Pi$ over vocabulary $\sigma$,
a set~$X$ of atoms over $\sigma$ is an \emph{input answer set} of $\Pi$ relative to vocabulary $\iota\subseteq\sigma$ so that $\hd(\Pi)\cap\iota=\emptyset$  when $X$ is an answer set of the program 
$$
\Pi\cup (X\cap\iota).
$$
\end{definition}

To illustrate the concept of an input answer set consider  program
\[
	\ba l
	lightOn\ar\ switch, not\ am.\\
	\ar not\ lightOn. \\
	\ea
\]
This program has a unique input answer set $\{switch,lightOn\}$ relative to input vocabulary $\{switch,am\}$.

Let $\sigma_r$ and $\sigma_i$ be two disjoint vocabularies.
We refer to their elements as \emph{regular}  and \emph{irregular} atoms respectively.

\begin{definition}
A {\em constraint answer set 
program} (CAS program) 
over the vocabulary $\sigma=\sigma_r
\cup\sigma_i$ is a triple $\langle \Pi,\cB,\gamma\rangle$, where 
\begin{itemize}
\item $\Pi$ is a logic
program over the vocabulary $\sigma$ such that $\hd(\Pi)\cap\sigma_i=
\emptyset$, 
\item $\cB$ is a set of constraints over some lexicon~$\cL$, 
and 
\item $\gamma$ is an injective 
function from the set $\sigma_i$ of 
irregular atoms  to the set $\cB$ of constraints.
\end{itemize}

For a CAS program $P=\langle \Pi,\cB,\gamma\rangle$ over the vocabulary $\sigma=
\sigma_r\cup\sigma_i$ so that $\cL$ is the lexicon of the constraints in $\cB$, a set $X\subseteq\At(\Pi)$ 
is an \emph{answer set}
of $P$ if
\begin{enumerate} 
\item  $X$ is an input answer set of $\Pi$ relative to $\sigma_i$, and 
\item the following GCSP over $\cL$ has a solution
\beq \{\gamma(a) \mid a\in X\cap\sigma_i\}\cup
\{\neg{\gamma(a)} \mid a\in  (\At(\Pi)\cap \sigma_i)\setminus X\}.
\eeq{eq:GCSPASP}
\end{enumerate}
Note that $\neg{\gamma(a)}$ may result in expression of the form $\neg\neg P(t_1,\dots,t_n)$ that we identify with $P(t_1,\dots,t_n)$. (We use this convention across the paper.)
\end{definition}
These definitions are  generalizations of CAS programs introduced by Gebser et al.~\shortcite{geb09} as they 
\begin{itemize}
\item 
 refer to the concept of GCSP in place of CSP in the original definition, and 
 \item  allow for more general syntax of logic rules (e.g. choice rules are covered by the presented definition). 
\end{itemize}

 This is a good place to note a restriction $\hd(\Pi)\cap\sigma_i=\emptyset$ on the form of the rules in a CAS program. This restriction states that an irregular atom may not form a head of a rule. There is a body of research including, for example, work by~\citeauthor{lee12}~(\citeyear{lee12}), ~\citeauthor{lif12a}~(\citeyear{lif12a}), and~\citeauthor{bal13a}~(\citeyear{bal13a}) that investigates variants of semantics of {\em CAS-like} programs, where analogous to irregular atoms are allowed in the heads. Lifting this restriction proves to be nontrivial.
Traditional CASP systems, such as~\clingcon or \ezcsp, restrict their attention to programs discussed here.

It is  worthwhile to remark on the second condition in the definition of an answer set for CAS programs. This condition  requires the {\em existence} of a solution to a constructed GCSP  problem, and ignores a form of a particular solution to this GCSP or a possibility of  multiple solutions. We now define a concept of an  extended answer set that  takes a solution to a constructed GCSP problem into account:
\begin{definition}
For a CAS program $P=\langle \Pi,\cB,\gamma\rangle$ over the vocabulary $\sigma=
\sigma_r\cup\sigma_i$ so that $\cL$ is the lexicon of the constraints in $\cB$, a set $X\subseteq\sigma$ and valuation $\nu$ from  variables in the signature of $\cL$ to the domain of $\cL$, a pair $\langle X,\nu\rangle$ is an   
 \emph{extended answer set}
of $P$ if $X$ is an answer set of $P$ and $\nu$ is a solution to GCSP~\eqref{eq:GCSPASP}. 
\end{definition}
CASP systems, such as~\clingcon or \ezcsp, allow the user to select whether he is interested in computing answer sets or extended answer sets of a given CAS program. In the rest of the paper we focus on the notion of an answer set, but generalizing  concepts and results introduced later to the notion of extended answer set is not difficult.

In the sequel we adopt the following notation. 
To distinguish irregular atoms from the constraints 
to which these atoms  are mapped,  
we use bars to denote that an expression  is an irregular atom. For instance, $|x<12|$ and  $|x\geq 12|$ denote
irregular atoms. Whereas inequalities (constraints) $x<12$ and $x\geq 12$, respectively,  provide natural mappings for these atoms. We assume such natural mappings  in this presentation.

\begin{example}\label{ex:casp} Let us  consider sample constraint answer set program.
We first define the integer lexicon~$\cL_3$:
\begin{equation}\label{eq:casp-lexicon}
 \begin{oldtabular}{ll}
 \hline 
  $\Sigma_2$ ~~~~~~~~~~ & a numeric signature containing one variable $\{x\}$ \\
  $\cL_3$ & $([\Sigma_2,\bZ],\rho_{\bZ})$\\
 \hline 
 \end{oldtabular}
\end{equation}
Second, we define a CAS program
\begin{equation}\label{eq:casp-program}
P_1 = \langle \Pi_1, \cB_{\cL_3}, \gamma_1 \rangle
\end{equation}
over integer lexicon $\cL_3$, where
\begin{itemize}
	\item $\Pi_1$ is the program
\begin{equation}\label{eq:casp}
\ba l
  \{switch\}.\\
  lightOn\ar\ switch, not\ am.\\
  \ar not\ lightOn. \\
  \{am\}.\\
  \ar not\ am, |x<12|.\\
  \ar am, |x \geq 12|.\\
\ar |x<0|.\\
\ar |x>23|.\\
\ea
\end{equation}
The set of irregular atoms of~$\Pi_1$ is~$\{|x<12|,|x\geq12|,|x<0|,|x>23|\}$. The remaining atoms form the regular set.
The first four lines of program $\Pi_1$ are identical to these of logic program \eqref{eq:acp}. The last four lines of the program state:
\begin{itemize}[noitemsep]
\item It must be $am$ when $x<12$, where $x$ is understood as the hours.
\item It is impossible for it to be $am$ when $x \geq 12$.
\item Variable $x$ must be nonnegative.
\item Variable $x$ must be less than or equal to $23$.
\end{itemize}
\bigskip

	\item $\cB_{\cL_3}$ is the set of all integer linear constraints over integer lexicon ${\cL_3}$, which obviously includes constraints $\{x<12,x\geq 12,x<0,x>23\}$, 
	
	\medskip
	\item $\gamma_1(a) = \begin{cases}
		\mbox{constraint $x<12$  over integer lexicon $\cL_3$}  &\mbox{if } a = |x<12| \\
		\mbox{constraint $x\geq 12$ over integer lexicon $\cL_3$}  &\mbox{if } a = |x \geq 12|.\\
		\mbox{constraint $x<0$  over integer lexicon $\cL_3$}  &\mbox{if } a = |x<0| \\
		\mbox{constraint $x> 23$ over integer lexicon $\cL_3$}  &\mbox{if } a = |x > 23|.
	\end{cases}$ 
\end{itemize}

\bigskip
\noindent
Consider the set 
\begin{equation}\label{eq:caspAnswerSet}
	\{switch, \ lightOn, |x \geq 12|\}
\end{equation}
over $At(\Pi_1)$. This set  is the only input answer set of $\Pi_1$ relative to its irregular atoms. 
Also, the 
 integer linear constraint satisfaction problem
 with constraints $$
 \ba{c}
 \{\gamma_1(|x \geq 12|), 
 \neg \gamma_1(|x < 12|),
 \neg \gamma_1(|x < 0|),
 \neg \gamma_1(|x > 23|)
 \}\\
 =\\
 \{x\geq12, 
 \neg x<12,
 \neg x<0,
 \neg x>23 
 \}
 \ea
 $$ has a solution. There are 12 valuations ${v_1}\dots v_{12}$ relative to integer lexicon $\cL_3$ for $x$, which satisfy this GCSP: $x^{v_1}=12,\dots,x^{v_{12}}=23$. 
It follows that
set~\eqref{eq:caspAnswerSet} is an answer set of $P_1$.
Pair 	$\langle \{switch, \ lightOn, |x \geq 12|\},\nu_1\rangle$ is one of the twelve extended answer sets of $P_1$.
\end{example}

\subsubsection{On Grounding}
In practice, CASP languages 
similarly to ASP languages, allow for non-constraint variables. 
\citeauthor{geb09}~(\citeyear{geb09}) present a program written in the language supported by CASP solver {\clingcon} for the so called {\em bucket} problem. We list a rule from that program to illustrate the notion of non-constraint variables\footnote{A rule is taken form the site documenting system {\clingcon}:  \url{http://www.cs.uni-potsdam.de/clingcon/examples/bucket_torsten.lp} .}:
\begin{verbatim}
:- volume(B,T+1) $!= volume(B,T) $+ A, pour(B,T,A), 
   bucket(B), time(T), amount(A).
\end{verbatim}
Non-constraint variables of these rules are $B$, $T$, and $A$.  
The  sign $\$$  marks the irregular atoms. 
Rules of the kind are interpreted 
as  shorthands for the set of rules without non-constraint variables, called {\em ground rules}. Ground rules are obtained by replacing
every non-constraint variable in the rules by suitable
terms not containing non-constraint variables.
For instance, if suitable terms for 
non-constraint variable $B$ in the sample rule above range over a single value $a$;
suitable terms for
$T$ range over two values $0$ and $1$; and suitable terms for 
$A$ range over two values $1$ and $2$, then the {\clingcon} rule above is instantiated as follows:
\begin{verbatim}
:- volume(a,1) $!= volume(a,0) $+ 1, pour(a,0,1), 
   bucket(a), time(0), amount(1).
:- volume(a,2) $!= volume(a,1) $+ 1, pour(a,1,1), 
   bucket(a), time(1), amount(1).
:- volume(a,1) $!= volume(a,0) $+ 2, pour(a,0,2), 
   bucket(a), time(0), amount(2).
:- volume(a,2) $!= volume(a,1) $+ 2, pour(a,1,2), 
   bucket(a), time(1), amount(2).
\end{verbatim}
To help to map these rules into the syntax used earlier, we rewrite the first one using notation of this paper:
$$
\ar |volume(a,1) \neq volume(a,0) + 1|, pour(a,0,1), 
bucket(a), time(0), amount(1).
$$
Expressions $volume(a,0)$, $volume(a,1)$, and $volume(a,2)$ stand for  constraint variables in the signature of integer lexicon of this program. System \clingcon supports integer linear constraints, where a constraint atom $|volume(a,1) \neq volume(a,0) + 1|$ is naturally mapped into integer linear constraint $volume(a,1) \neq volume(a,0) + 1$.

The process of replacing non-ground rules by their ground
counterparts is called {\em grounding} and is well understood in
ASP~\cite{geb07b,cal08}. 
The availability of non-constraint variables makes modeling in CASP an attractive and easy process. In fact, this feature distinguishes greatly CASP and SMT paradigms. An SMT-LIB language~\cite{smt15} is a standard language for interfacing major SMT solvers. This language  
 does not provide a user with convenient modeling capabilities. There is an underlying assumption that code in SMT-LIB is to be generated by special-purpose programs. 
   
Since grounding is a stand-alone process in CASP and a non-ground program is viewed as a shorthand for ground programs the focus of this paper is on the later.

\subsubsection{Input Completion} Similar to how completion was defined in Section~\ref{sec:intro}, we now define an input completion which is relative to an (input) vocabulary.
\begin{definition}
For a program $\Pi$ over vocabulary $\sigma$,
the {\em input-completion} of $\Pi$ relative to vocabulary~\hbox{$\iota\subseteq\sigma$} so that $\hd(\Pi)\cap\iota=\emptyset$, denoted by~$IComp(\Pi,\iota)$, is defined as the set of formulas in propositional logic that consists of the  rules~\eqref{e:prop-formula} in $\Pi$ and the implications~\eqref{eq:comp2}
for all atoms~$a$ occurring in $\sigma\setminus\iota$.
\end{definition}

\begin{example}\label{ex:input-comp} Here we illustrate the concept of input completion. Consider program $\Pi_1$ from Example~\ref{ex:casp}.
	Its input completion relative to a vocabulary consisting of its irregular atoms~$\{|x<12|,|x\geq12|,|x<0|,|x>23|\}$ consists of formulas in~\eqref{eq:acp-comp2} and formulas in
\begin{equation} \label{eq:acp-icomp1}
\ba l
  \neg am\wedge |x<12| \rar \bot\\
  am\wedge |x \geq 12| \rar \bot.\\
  |x<0|\rar \bot.\\
  |x>23|\rar \bot.\\
\ea
\end{equation}
It is easy to see that $IComp(\Pi_1,\{|x<12|,|x \geq 12|,|x<0|,|x>23|\})$ is equivalent to the  union of~\eqref{eq:acp-comp} and~\eqref{eq:acp-icomp1}.
The set  $\{switch,lightOn,|x \geq 12|\}$ is the only model of this input completion. Note that this model coincides with the input answer set of $\Pi_1$ relative to the set of its irregular atoms.
\end{example}

The observation that we made last in the preceding example is an instance of the general fact captured by the following theorem. 
\begin{theorem}
\label{thm:input-ans-set-iff-sat-icomp}
 For a tight program $\Pi$ over vocabulary $\sigma$ and vocabulary $\iota\subseteq\sigma$ so that $\hd(\Pi)\cap \iota=\emptyset$, a set~$X$ of atoms from~$\sigma$ is an input answer set 
of~$\Pi$ relative to~$\iota$ if and only if $X$ satisfies the program's input-completion $IComp(\Pi,\iota)$.
\end{theorem}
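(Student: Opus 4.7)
The plan is to reduce the claim to the Fages/Erdem--Lifschitz theorem (already cited in the excerpt) stating that for tight programs, answer sets coincide with models of completion. To that end, I will set $\Pi' = \Pi \cup (X \cap \iota)$ and study the relationship between $Comp(\Pi')$ and $IComp(\Pi,\iota)$.

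First I will verify that $\Pi'$ inherits tightness from $\Pi$. The only rules added when forming $\Pi'$ are facts (rules with empty body), so they contribute no outgoing edges in the dependency graph. Hence, the dependency graph of $\Pi'$ is obtained from that of $\Pi$ by adding at most some isolated vertices from $X \cap \iota$, and so $\Pi'$ is tight. Applying the Fages/Erdem--Lifschitz theorem then yields: $X$ is an answer set of $\Pi'$ iff $X \models Comp(\Pi')$. By the definition of input answer set, the left-hand side is equivalent to $X$ being an input answer set of $\Pi$ relative to $\iota$.

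Next I will rewrite $Comp(\Pi')$ in terms of $IComp(\Pi,\iota)$. The completion $Comp(\Pi')$ contains (i) the implications corresponding to rules of $\Pi$, (ii) the facts in $X \cap \iota$ (each being a single-literal clause), and (iii) an implication $a \to \bigvee_{a \leftarrow B \in \Pi'} B$ for every atom $a$ of $\Pi'$. Because $\hd(\Pi)\cap\iota = \emptyset$, for any $a \in \iota$ the only rules with head $a$ in $\Pi'$ come from $X \cap \iota$; so the completion implication for $a$ collapses to $a \to \top$ if $a \in X \cap \iota$ and to $a \to \bot$ otherwise. Meanwhile, for atoms $a \in \sigma \setminus \iota$ the implications coincide exactly with those comprising $IComp(\Pi,\iota)$. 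Thus $Comp(\Pi')$ is equivalent to
\[
IComp(\Pi,\iota) \;\cup\; (X \cap \iota) \;\cup\; \{\neg a \mid a \in \iota \setminus X\}.
\]

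Finally, I will observe that since $X \subseteq \sigma$ treated as an interpretation makes every atom in $X \cap \iota$ true and every atom in $\iota \setminus X$ false automatically, $X$ satisfies the last two components of the above union by construction. Consequently, $X \models Comp(\Pi')$ iff $X \models IComp(\Pi,\iota)$. Chaining the equivalences then gives the desired biconditional. The main technical care is bookkeeping the vocabularies—verifying that the completion formulas for irregular atoms in $\Pi'$ really do reduce as claimed and that no additional completion formulas for atoms outside $\sigma$ appear—which is straightforward from the assumption $\hd(\Pi)\cap\iota = \emptyset$ and the fact that adding facts from $X \cap \iota$ does not enlarge $\sigma$.
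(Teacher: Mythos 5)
Your proposal is correct and follows essentially the same route as the paper: the paper isolates your completion-comparison step as a separate lemma (showing $X \models IComp(\Pi,\iota)$ iff $X \models Comp(\Pi\cup(X\cap\iota))$), notes as you do that adding the facts $X\cap\iota$ preserves tightness, and then closes the argument with the Fages/Erdem--Lifschitz theorem exactly as you propose. The only difference is presentational: you prove the lemma inline by rewriting $Comp(\Pi')$ as $IComp(\Pi,\iota)$ plus formulas that $X$ satisfies automatically, which is the same bookkeeping the paper carries out.
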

Furthermore, for any program any of its input answer sets  is also a model of its input-completion.
\begin{theorem}
\label{thm:input-ans-set-iff-sat-icomp2}
 For a program $\Pi$ over vocabulary $\sigma$ and vocabulary $\iota\subseteq\sigma$ so that $\hd(\Pi)\cap \iota=\emptyset$, if a set~$X$ of atoms from~$\sigma$ is an input answer set 
of~$\Pi$ relative to~$\iota$ then $X$ satisfies the program's input-completion $IComp(\Pi,\iota)$.
\end{theorem}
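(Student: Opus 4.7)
My plan is to reduce this to the classical fact that every answer set of a program is a model of its completion, and then check that the extra implications and missing implications line up correctly once we pass from $Comp(\Pi\cup(X\cap\iota))$ to $IComp(\Pi,\iota)$.

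First, I would unfold the definition of an input answer set: $X$ is an input answer set of $\Pi$ relative to $\iota$ exactly when $X$ is an answer set of the augmented program $\Pi'=\Pi\cup(X\cap\iota)$. It is a standard, well-known fact (implicit already in the proof of Theorem~\ref{thm:ans-iff-lr}, via the reduct-based argument) that any answer set of a program satisfies every rule of that program and every completion implication~\eqref{eq:comp2}. I would invoke this to conclude that $X\models Comp(\Pi')$.

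Next I would compare $IComp(\Pi,\iota)$ with $Comp(\Pi')$. The implications coming from rules are straightforward: $IComp(\Pi,\iota)$ contains the propositional formulas~\eqref{e:prop-formula} for every rule of $\Pi$, and these are all satisfied by $X$ because $\Pi\subseteq\Pi'$ and $X\models\Pi'$. For the completion implications~\eqref{eq:comp2}, $IComp(\Pi,\iota)$ contains one for every $a\in\sigma\setminus\iota$, whereas $Comp(\Pi')$ contains one for every $a\in\sigma$. The key observation is that for any $a\in\sigma\setminus\iota$, $Bodies(\Pi',a)=Bodies(\Pi,a)$: the only rules in $\Pi'\setminus\Pi$ are the facts in $X\cap\iota$, whose heads all lie in $\iota$ and therefore are distinct from $a$. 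Consequently, the completion implication for $a$ in $IComp(\Pi,\iota)$ is syntactically identical to the corresponding implication in $Comp(\Pi')$, and so it is satisfied by~$X$. This yields $X\models IComp(\Pi,\iota)$.

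I do not expect any real obstacle here; the result is essentially the trivial ``answer set $\Rightarrow$ model of completion'' direction of Fages' theorem, where no tightness assumption is needed. The only minor care required is in verifying that the completion implications for atoms in $\sigma\setminus\iota$ are unaffected by the added facts $X\cap\iota$, which is immediate from the hypothesis $\hd(\Pi)\cap\iota=\emptyset$ together with the fact that the heads of the added facts belong to $\iota$.
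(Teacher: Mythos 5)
Your proof is correct and follows essentially the same route as the paper, which proves this theorem as the left-to-right direction of Theorem~\ref{thm:input-ans-set-iff-sat-icomp}: reduce to the classical fact that an answer set of $\Pi\cup(X\cap\iota)$ models $Comp(\Pi\cup(X\cap\iota))$, then transfer to $IComp(\Pi,\iota)$ via the observation that $Bodies(\Pi\cup(X\cap\iota),a)=Bodies(\Pi,a)$ for $a\in\sigma\setminus\iota$ because $\hd(\Pi)\cap\iota=\emptyset$. The only cosmetic difference is that the paper packages the comparison of the two completions as a standalone equivalence (Lemma~\ref{lem:comp}), whereas you carry out just the needed direction inline.
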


To prove these theorems it is useful to state the following lemma.
\begin{lemma}\label{lem:comp}
 For a program $\Pi$ over vocabulary $\sigma$ and vocabulary $\iota\subseteq\sigma$ so that $\hd(\Pi)\cap \iota=\emptyset$, a set~$X$ of atoms over~$\sigma$ is a model of formula $IComp(\Pi,\iota)$ if and only if it is a model of 
 $Comp(\Pi \cup (X\cap\iota)).$
\end{lemma}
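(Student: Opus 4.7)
The plan is to expand both formulas into their constituent pieces, classify those pieces by whether the atom they concern lies in $\iota$ or in $\sigma \setminus \iota$, and then argue that under the assumption ``$X$ is the model'', the pieces that differ between $IComp(\Pi,\iota)$ and $Comp(\Pi \cup (X\cap \iota))$ are either trivially true in $X$ or equivalent to conditions that are forced by the other pieces.

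More concretely, I would first write out the three components of $Comp(\Pi \cup (X \cap \iota))$: (i) the implications corresponding to the rules of $\Pi$; (ii) the facts $a$ for every $a \in X \cap \iota$; (iii) the completion implications $a \to \bigvee_{a\leftarrow B \in \Pi \cup (X\cap\iota)} B$ for every $a \in \sigma$. Using the hypothesis $\hd(\Pi)\cap\iota=\emptyset$, I would observe that for $a \in \sigma \setminus \iota$ the set $Bodies(\Pi \cup (X\cap\iota),a)$ equals $Bodies(\Pi,a)$, so the implications in (iii) for atoms outside $\iota$ coincide exactly with the completion implications appearing in $IComp(\Pi,\iota)$. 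Likewise, the rules in (i) are precisely the rules of $\Pi$, which also appear in $IComp(\Pi,\iota)$. Therefore the two formulas differ only on: the facts (ii); and the completion implications in (iii) for atoms $a \in \iota$.

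Next I would handle those remaining atoms $a \in \iota$ by cases. If $a \in X$, then $a$ is a fact in $\Pi \cup (X\cap\iota)$, so the implication in (iii) has the form $a \to \top$, which $X$ satisfies trivially; and the corresponding fact in (ii) is satisfied because $a \in X$. If $a \notin X$, then $Bodies(\Pi \cup (X\cap\iota),a) = Bodies(\Pi,a) = \emptyset$ (using again $\hd(\Pi)\cap\iota=\emptyset$), so the completion implication is $a \to \bot$, which $X$ satisfies because $a \notin X$; and no fact for $a$ appears in (ii). Thus, no matter which direction of the biconditional we are proving, the ``extra'' formulas present in $Comp(\Pi \cup (X\cap\iota))$ but absent from $IComp(\Pi,\iota)$ are automatically satisfied by $X$, while all formulas in $IComp(\Pi,\iota)$ appear verbatim inside $Comp(\Pi \cup (X\cap\iota))$. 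This yields the biconditional.

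I do not expect any serious obstacle: the only subtle point is that the set $X \cap \iota$ used to form the second completion depends on $X$ itself, which is what makes the statement work. Care must be taken to use the assumption $\hd(\Pi)\cap\iota=\emptyset$ in the right place, namely to guarantee that $Bodies(\Pi \cup (X \cap \iota),a)$ for $a \in \iota$ contains only the empty body contributed by the fact (if any) and nothing from $\Pi$. Once this observation is in place, the case analysis above is routine.
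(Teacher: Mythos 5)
Your proposal is correct and follows essentially the same route as the paper's proof: both decompose the two completions, use $\hd(\Pi)\cap\iota=\emptyset$ to show the completion implications for atoms outside $\iota$ coincide, and observe that the extra pieces of $Comp(\Pi\cup(X\cap\iota))$ (the facts from $X\cap\iota$ and the implications for atoms in $\iota$) are satisfied by $X$ unconditionally. Your organization is slightly tidier in that you get both directions of the biconditional at once from the containment-plus-trivial-difference observation, whereas the paper argues each direction separately, but the substance is identical.
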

\begin{proof}
Since we  identify rules~\eqref{e:rulea:b} in $\Pi$  with respective implications~$B\rightarrow a$  we may write symbol~$\Pi$ to denote not only a set of  rules but also a set of respective implications.

We can write $IComp(\Pi,\iota)$ as the union of
\beq
\Pi,
\eeq{eq:prog} 
implications
\beq
\{a \rar \bigvee_{B\in Bodies(\Pi,a)} B ~|~  a \in \hd(\Pi)\},
\eeq{eq:head-imply-body}
and
\beq
\{a \rar \bot ~|~  a \not \in \hd(\Pi) \hbox{ and } a\in \sigma\setminus\iota\}.
\eeq{eq:head-imply-body2}
Note that expression~\eqref{eq:head-imply-body2} can be written as 
\beq
\{a \rar \bot ~|~  a \not \in \hd(\Pi) \hbox{ and } a\not \in \iota \hbox{ and } a\in \sigma\}.
\eeq{eq:head-imply-body3}

Similarly, we can write $Comp(\Pi \cup (X\cap\iota))$ as a union of \eqref{eq:prog},
\beq
(X\cap\iota) 
\eeq{eq:ans-diff-prog}
\beq
\{a \rar \bigvee_{B\in Bodies(\Pi\cup (X\cap\iota),a)} B ~|~  a \in \hd(\Pi)\}
\eeq{eq:head-imply-body-union}
\beq
\{a \rar \bot ~|~ a \not \in \hd(\Pi) \hbox{ and } a  \not \in (X\cap\iota) \hbox{ and } a\in \sigma\}
\eeq{eq:head-imply-false}

Left-to-right:
Assume $X \models IComp(\Pi,\iota)$.
It consists of~\eqref{eq:prog},~\eqref{eq:head-imply-body}, and~\eqref{eq:head-imply-body3} by construction.
 Trivially, $X$ satisfies~\eqref{eq:ans-diff-prog}. 
Thus we are left to show that $X$ satisfies
  \eqref{eq:head-imply-body-union} and~\eqref{eq:head-imply-false}. 
  Note that  \eqref{eq:head-imply-body} and
\eqref{eq:head-imply-body-union}  coincide since $Bodies(\Pi,a)=Bodies(\Pi \cup (X\cap\iota)),a)$ for all atoms $a \in \hd(\Pi)$ as $\hd(\Pi)\cap\iota=\emptyset$. 
Consequently,~$X$ satisfies  \eqref{eq:head-imply-body-union}.
Observe set~\eqref{eq:head-imply-false} can be written as the union of set~\eqref{eq:head-imply-body3}
and set 
\beq
\{a \rar \bot ~|~ a \not \in \hd(\Pi) \hbox{ and } a  \in (\iota\setminus X) \hbox{ and } a\in \sigma\}.
\eeq{eq:head-imply-false2}
Trivially, $X$ satisfies~\eqref{eq:head-imply-false2} as any atom that satisfies condition $a  \in (\iota\setminus X)$ is such that $a\not\in X$. 
It also satisfies~\eqref{eq:head-imply-body3}. Consequently, $X$ satisfies~\eqref{eq:head-imply-false}. 
 
Right-to-left:
Assume $X$ is a model of 
 $Comp(\Pi \cup (X\cap\iota)).$
 Set~$X$ satisfies~\eqref{eq:prog}, \eqref{eq:ans-diff-prog}, \eqref{eq:head-imply-body-union}, and \eqref{eq:head-imply-false}.
 Since $X$ satisfies \eqref{eq:head-imply-body-union}, it  satisfies \eqref{eq:head-imply-body} (as they coincide as argued above). 
  Since $X$ satisfies~\eqref{eq:head-imply-false}, $X$ also satisfies~\eqref{eq:head-imply-body3} (Recall  set~\eqref{eq:head-imply-false} can be written as the union of set~\eqref{eq:head-imply-body3}
  and set~\eqref{eq:head-imply-false2}). 
Consequently,  $X \models IComp(\Pi,\iota)$. 
\end{proof}

\begin{proof} [Proof of Theorem~\ref{thm:input-ans-set-iff-sat-icomp}]
We are given that $\Pi$ is tight.
 Since $X\cap\iota$ only consists of facts, it follows that~$\Pi \cup (X\cap\iota)$ is tight also.

Left-to-right: Assume $X$ is an input answer set of a program $\Pi$ relative to $\iota$. 
By Definition~\ref{def:input-answer-set},~$X$ is  an answer set of $\Pi\cup (X\cap\iota)$.  
It is well known that an answer set of a program is also a model of its completion. Thus,~$X$ is a model of~$Comp(\Pi \cup (X\cap\iota))$.
By Lemma~\ref{lem:comp},~$X$ is a model of $IComp(\Pi,\iota)$.
 
Right-to-left: Assume $X \models IComp(\Pi,\iota)$. 
By Lemma~\ref{lem:comp}, $X$ is a model of completion
 \hbox{$Comp(\Pi \cup (X\cap\iota))$}. 
 By results by Fages~\shortcite{fag94} and  Erdem and Lifshitz~\shortcite{erd01}, $X$ is the answer set of the program $\Pi \cup (X\cap\iota)$. By Definition~\ref{def:input-answer-set}, $X$ is an input answer set of~$\Pi$ relative to $\iota$.
\end{proof}
Direction Left-to-right of the proof of Theorem~\ref{thm:input-ans-set-iff-sat-icomp} serves as a proof for Theorem~\ref{thm:input-ans-set-iff-sat-icomp2} also.

\subsection{Constraint Formula}
Just as we defined constraint answer set programs, we can define constraint  formulas.

For a propositional formula $F$, by $\At(F)$ we denote the set of atoms occurring in it.
\begin{definition}
A {\em constraint  formula}  
over the vocabulary $\sigma=\sigma_r
\cup\sigma_i$  is a triple $\langle F,\cB,\gamma\rangle$, where 
\begin{itemize}
\item 	$F$ is a propositional formula 
over the vocabulary~$\sigma$,
\item $\cB$ is a set of constraints over some lexicon~$\cL$,
and
\item $\gamma$ is an injective function from the set~$\sigma_i$ of 
irregular atoms to the set $\cB$ of constraints.
\end{itemize}

For a constraint formula $\F=\langle F,\cB,\gamma\rangle$ over the vocabulary $\sigma=
\sigma_r\cup\sigma_i$ such that $\cL$ is the lexicon of the constraints in $\cB$, a set $X\subseteq\At(F)$ 
is a \emph{model} of $\F$ if
\begin{enumerate}
\item  $X$ is a model of $F$, and 
\item the following GCSP over $\cL$ has a solution
\[\{\gamma(a) | a\in X\cap\sigma_i\}\cup
\{\neg{\gamma(a)} | a\in  (\At(F)\cap \sigma_i)\setminus X\}.\]%
\end{enumerate}
\end{definition}

\begin{example}\label{ex:constraint-formula} 
Similar to the CAS program $P_1$ from Example~\ref{ex:casp}, we can define a constraint formula 
\[\cF_1 = \langle IComp(\Pi_1,\{|x<12|,|x \geq 12|,|x<0|,|x>23|\}),\cB_{\cL_3},\gamma_1\rangle\]
relative to integer lexicon $\cL_3$. We understand $\Pi_1$, $\cB_{\cL_3}$, and $\gamma_1$ as in Example~\ref{ex:casp}. Example~\ref{ex:input-comp}
illustrates how input completion $IComp(\Pi_1,\{|x<12|,|x \geq 12|,|x<0|,|x>23|\})$ is formed.
The set 
\[\{switch,lightOn,|x \geq 12|\}\]
is the only model of $\cF_1$. 
\end{example}

Following theorem captures a relation between CAS programs and constraint formulas. This theorem is an immediate consequence of  Theorem \ref{thm:input-ans-set-iff-sat-icomp}.

\begin{theorem}\label{thm:casptight}
For a CAS program $P=\langle \Pi,\cB,\gamma\rangle$ over the vocabulary $\sigma=
\sigma_r\cup\sigma_i$ 
and a set $X$ of atoms over $\sigma$, when  $\Pi$ is tight, 
$X$ is an answer set of $P$ if and only if $X$ is a model of constraint formula  $\langle IComp(\Pi,\sigma_i),\cB,\gamma\rangle$ over $\sigma=
\sigma_r\cup\sigma_i$. 
\end{theorem}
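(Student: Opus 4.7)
The plan is to unfold both definitions side-by-side and observe that, under the tightness hypothesis, Theorem~\ref{thm:input-ans-set-iff-sat-icomp} lines them up exactly. Specifically, by definition $X$ is an answer set of $P$ iff two conditions hold: (i)~$X$ is an input answer set of $\Pi$ relative to $\sigma_i$, and (ii)~the GCSP
\[
\{\gamma(a) \mid a\in X\cap\sigma_i\}\cup
\{\neg{\gamma(a)} \mid a\in (\At(\Pi)\cap \sigma_i)\setminus X\}
\]
over $\cL$ has a solution. Likewise, $X$ is a model of $\langle IComp(\Pi,\sigma_i),\cB,\gamma\rangle$ iff (i$'$)~$X$ is a model of $IComp(\Pi,\sigma_i)$ and (ii$'$)~the analogous GCSP, but with $\At(\Pi)$ replaced by $\At(IComp(\Pi,\sigma_i))$, has a solution.

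First I would invoke Theorem~\ref{thm:input-ans-set-iff-sat-icomp}: since $\Pi$ is tight and $\hd(\Pi)\cap\sigma_i=\emptyset$ (the latter holding by the definition of a CAS~program), condition~(i) is equivalent to condition~(i$'$). Second, I would verify that conditions~(ii) and~(ii$'$) are literally the same GCSP. This boils down to the observation $\At(\Pi)=\At(IComp(\Pi,\sigma_i))$, which follows directly from the construction of input completion: $IComp(\Pi,\sigma_i)$ consists of the propositional counterparts of the rules of $\Pi$ together with implications of the form $a\rightarrow\bigvee_{B\in Bodies(\Pi,a)} B$ for atoms $a\in\sigma\setminus\sigma_i$, so no new atoms are introduced and no atoms of $\Pi$ disappear.

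Combining these two equivalences yields the theorem. There is no real obstacle here — the statement is genuinely a corollary of Theorem~\ref{thm:input-ans-set-iff-sat-icomp} once one observes that the second clauses of both definitions coincide on the nose, the only mild point being the matching of the atom sets $\At(\Pi)$ and $\At(IComp(\Pi,\sigma_i))$, which is immediate from how input completion is built.
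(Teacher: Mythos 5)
Your proposal is correct and matches the paper's approach: the paper presents Theorem~\ref{thm:casptight} as an immediate consequence of Theorem~\ref{thm:input-ans-set-iff-sat-icomp}, which is exactly the reduction you carry out, with the added (correct) bookkeeping that the two GCSPs coincide because $IComp(\Pi,\sigma_i)$ introduces no new irregular atoms. Nothing further is needed.
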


We note that \exref{ex:casp} and \exref{ex:constraint-formula} demonstrate this property. 
In the future we will abuse the term "tight". We will refer to CAS program  $P=\langle \Pi,\cB,\gamma\rangle$ as {\em tight} when its first member $\Pi$ has this property.


\section{Satisfiability Modulo Theories}\label{sec:smt}

First, in this section we  introduce the notion of a ``theory'' in satisfiability modulo theories~(SMT)~\cite{BarTin-14}.
Second, we present the definition of a ``restriction formula'' and state the conditions under which such formulas are satisfied by a given interpretation. 
These formulas are syntactically restricted classical ground predicate logic formulas. To be precise, a restriction formula  corresponds to a conjunction of ground literals.
The presented notions of interpretation and satisfaction are usual, but are stated in terms convenient for our purposes.
 
\begin{definition}\label{def:theory}
An {\em interpretation}~$I$ for a signature~$\Sigma$, or \hbox{\em $\Sigma$-interpretation}, 
is a tuple $(D,\nu,\rho,\phi)$ where 
\begin{itemize}
\item $D$ is a domain,
\item  $\nu$ is a $[\var{\Sigma},D]$ valuation,
\item $\rho$ is a $[\rels{\Sigma},D]$ r-denotation, and
\item $\phi$ is a $[\fsym{\Sigma},D]$ f-denotation.
\end{itemize}
For a signature $\Sigma$, a {\em $\Sigma$-theory} is a set of interpretations over $\Sigma$.
\end{definition}

For signatures that contain no f-symbols, we omit the reference to the last element of the interpretation tuple.
 For instance, for signature~$\Sigma_1$ from Table~\ref{tbl:cspDefinitions}, consider the following sample interpretations:
 \begin{equation}\label{eq:interpretations}
  \begin{oldtabular}{ll}
  \hline 
   $\cI_1$ ~~~~~~~~~~ & $(D_1,\nu_1,\rho_1)$ \\
   $\cI_2$ & $(D_1,\nu_2,\rho_1)$ \\
  \hline 
  \end{oldtabular}
 \end{equation}
Any subset of interpretations $\{\cI_1,\cI_2\}$ exemplifies a unique $\Sigma_1$-theory.

As mentioned earlier, in  literature on predicate logic and SMT, the terms  {\em object constant} and {\em function symbol of arity~$0$} are commonly used to refer to elements in the signature that we call variables.
Here we use the terms that stem from definitions related to constraint satisfaction processing to facilitate uncovering the precise link between CASP-like formalisms and SMT-like formalisms. It is typical for predicate logic signatures to contain propositional symbols (predicate symbols of arity~$0$).
It is easy to extend the notion of signature introduced here to allow propositional symbols. Yet it will complicate the presentation, which is the reason we avoid this extension.

A {\em restriction formula} over a signature $\Sigma$
is a finite set of constraint literals over a \hbox{c-vocabulary} $[\Sigma,\emptyset]$.
A sample restriction formula over $\Sigma_1$ follows
\beq
\{\neg E(s), \neg Q(r,s)\}.
\eeq{eq:tformula}

We now introduce the semantics of restriction formulas.
Let~$I=(D,\nu,\rho,\phi)$ be  a \hbox{$\Sigma$-interpretation}.
To each term~$\tau$ over a c-vocabulary $[\Sigma,\emptyset]$,
$I$ assigns a value $\tau^{\nu,\phi}$ that we denote by~$\tau^I$.
We say that interpretation~$I$ {\em satisfies} restriction formula~$\Phi$ over~$\Sigma$, when~$\nu$ satisfies every constraint literal in~$\Phi$ with respect to~$\rho$ and~$\phi$. 
For instance, interpretation $\cI_2$ satisfies restriction formula~\eqref{eq:tformula}, while $\cI_1$ does not satisfy~\eqref{eq:tformula}.

We say that a restriction formula $\Phi$ over signature $\Sigma$ is {\em satisfiable} in a $\Sigma$-theory $T$, or is $T$-satisfiable, when there is an element of the set $T$ that satisfies $\Phi$. 
For example, restriction formula~\eqref{eq:tformula} is satisfiable in  any $\Sigma_1$-theory that contains  interpretation $\cI_2$. 
On the other hand,  restriction formula~\eqref{eq:tformula} is not satisfiable in  $\Sigma_1$-theory $\{\cI_1\}$.

To conclude the introduction to the concept of $\Sigma$-theory:  from the semantical perspective, it is a collection of $\Sigma$-interpretations;
from the syntactic perspective, the theory is
 the collection of restriction formulas satisfied by these models.

\subsection{Uniform Theories and Link to Generalized Constraint Satisfaction Processing}
The presented definition of a theory (Definition~\ref{def:theory}) places no restrictions on the domains, r-denotations, or f-denotations being identical across the interpretations defining a theory. In practice, such restrictions are very common in SMT. 
We  now define ``uniform'' theories that follow these typical restrictions. We will then show how restriction formulas interpreted over uniform theories can practically be seen as   GCSPs.

\begin{definition}\label{ref:defu}
	For a signature $\Sigma$, we call a $\Sigma$-theory $T$ {\em uniform} over lexicon~$\cL=([\Sigma,D],\rho,\phi)$  when
	\begin{enumerate} 
		\item\label{l:defu1} all interpretations in $T$ are of the form $(D,\nu,\rho,\phi)$ (note how valuation $\nu$ is the only not fixed element in the interpretations), and
		\item\label{l:defu2} for every possible $[\var{\Sigma},D]$ valuation $\nu$, there is an interpretation 
		$(D,\nu,\rho,\phi)$ in $T$.
	\end{enumerate}	
\end{definition}

\begin{example}\label{ex:uniform}
To illustrate a concept of a uniform theory,
a table below defines  sample domain~$D_2$, valuations~$\nu_3$ and~$\nu_4$, and r-denotation~$\rho_4$.
\begin{center}
 \begin{oldtabular}{ll}
 \hline 
  $D_2$~~~~~~~~~~  & $\{1,2\}$\\
  $\nu_3$ &  $[\var{\Sigma_1},D_2]$ valuation, where $s^{\nu_3}=1$ and $r^{\nu_3}=2$\\
$\nu_4$ & $[\var{\Sigma_1},D_2]$ valuation, where $s^{\nu_4}=2$ and $r^{\nu_4}=2$\\
$\rho_4$ & $[\rels{\Sigma_1},D_2]$ r-denotation, where 
$
E^{\rho_4}=\{\langle 2 \rangle\},
Q^{\rho_4}=\{\langle 1,1 \rangle,\langle 2,2 \rangle\}$\\
\hline 
 \end{oldtabular}
 \end{center}
We also note that valuations $\nu_1$ and $\nu_2$ from Table~\ref{tbl:cspDefinitions} can be seen not only as $[\var{\Sigma_1},D_1]$ valuations, but also as $[\var{\Sigma_1},D_2]$ valuations.

The set
\beq
\{(D_2,\nu_1,\rho_4),(D_2,\nu_2,\rho_4),(D_2,\nu_3,\rho_4),(D_2,\nu_4,\rho_4)\}
\eeq{eq:t1}
of $\Sigma_1$ interpretations is an example of a uniform theory over lexicon $([\Sigma_1,D_2],\rho_4)$. 
The set
$$
\{(D_2,\nu_1,\rho_4),(D_2,\nu_2,\rho_4),(D_2,\nu_3,\rho_4),(D_1,\nu_4,\rho_4)\}
$$
of $\Sigma_1$ interpretations is an example of a non-uniform theory. Indeed, the condition~\ref{l:defu1} of Definition~\ref{ref:defu}    does not hold for this theory: the last interpretation refers to a different domain than the others. 
Also, recall interpretations $\cI_1$ and $\cI_2$ given in~\eqref{eq:interpretations}. 
Neither $\Sigma_1$-theory~$\{\cI_1\}$ nor $\{\cI_1,\cI_2\}$ 
 is uniform over lexicon~$([\Sigma_1,D_1],\rho_1)$. In this case, the condition~\ref{l:defu2} of Definition~\ref{ref:defu} does not hold. 
\end{example}

It is easy to see that for uniform theories we can identify their interpretations of the form $(D,\nu,\rho,\phi)$ with their second element -- valuation $\nu$. Indeed, the other three elements are fixed by the lexicon over which the uniform theory is defined.
In the following, we will sometimes use this convention. For example, we may refer to 
interpretation $(D_2,\nu_1,\rho_4)$ of uniform theory~\eqref{eq:t1} as $\nu_1$.

For uniform $\Sigma$-theory $T$ over lexicon $([\Sigma,D],\rho,\phi)$, 
we can extend the syntax of restriction formulas by saying that a {\em restriction formula} is defined over c-vocabulary $[\Sigma,D]$ as a finite set  of constraint literals over $[\Sigma,D]$ (earlier we considered constraint literals over~$[\Sigma,\emptyset]$). The earlier definition of semantics is still applicable. 

We now present a theorem that makes the connection between GCSPs over a lexicon~$\cL$ and restriction formulas interpreted using the uniform theory $T$ over the same lexicon  apparent. As a result, the question whether a given GCSP over $\cL$ has a solution  translates into the question whether the set of constraint literals of a GCSP forming a restriction formula is $T$-satisfiable. Furthermore, any solution to such GCSP is also an interpretation in $T$ that satisfies the respective restriction formula, and the other way around. 

\begin{theorem}\label{thm:smtcas}
	For a lexicon $\cL=([\Sigma,D],\rho,\phi)$, a set $\Phi$ of constraint literals over c-vocabulary~$[\Sigma,D]$, and a uniform $\Sigma$-theory $T$ over lexicon $\cL$ the following holds
	\begin{enumerate}
		\item\label{l:t3:1} for any $[\var{\Sigma},D]$ valuation $\nu$, there is an interpretation $\nu$ in $T$, 
		\item\label{l:t3:2} $[\var{\Sigma},D]$ valuation $\nu$ is a solution to GCSP $\Phi$ over lexicon $\cL$ if and only if  interpretation $\nu$ in $T$ satisfies restriction formula $\Phi$,
		\item\label{l:t3:3} GCSP $\Phi$ over lexicon $\cL$ has a solution if and only if  restriction formula $\Phi$ is $T$-satisfiable. 
	\end{enumerate}
\end{theorem}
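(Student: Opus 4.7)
The plan is to prove the three parts in order, exploiting the fact that each subsequent part leans on the preceding ones, and that the crux of the argument is simply an unpacking of definitions.

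For part~\ref{l:t3:1}, I would note that the statement is literally condition~\ref{l:defu2} of Definition~\ref{ref:defu}: since $T$ is uniform over $\cL=([\Sigma,D],\rho,\phi)$, every $[\var{\Sigma},D]$ valuation $\nu$ gives rise to an interpretation $(D,\nu,\rho,\phi)\in T$, which under the convention stated just before the theorem is identified with $\nu$ itself. This is therefore essentially a one-line appeal to the definition.

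For part~\ref{l:t3:2}, the strategy is to observe that the syntactic object $\Phi$ is a set of constraint literals over $[\Sigma,D]$, and both notions of satisfaction reduce to the same pointwise check. Concretely, by the semantics of a GCSP, $\nu$ is a solution to $\Phi$ over $\cL$ iff $\nu$ satisfies every $c\in\Phi$ with respect to $\rho$ and $\phi$. By the extended semantics of restriction formulas under a uniform theory (stated just after the convention identifying interpretations with their valuations), interpretation $\nu\in T$ satisfies $\Phi$ iff its valuation component satisfies every constraint literal in $\Phi$ with respect to the same $\rho$ and $\phi$ (which are the fixed r- and f-denotations of every interpretation in the uniform theory $T$ over $\cL$). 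The two conditions are verbatim identical, so the equivalence holds. The only subtlety to flag is that one must invoke uniformity (condition~\ref{l:defu1} of Definition~\ref{ref:defu}) to know that the $\rho$ and $\phi$ used in interpreting the restriction formula coincide with those of the lexicon used in interpreting the GCSP.

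For part~\ref{l:t3:3}, I would argue both directions as a short corollary of parts~\ref{l:t3:1} and \ref{l:t3:2}. Left-to-right: given a solution $\nu$ to the GCSP $\Phi$, part~\ref{l:t3:1} supplies an interpretation $\nu\in T$, and part~\ref{l:t3:2} promotes $\nu$ to a witness of $T$-satisfiability of the restriction formula $\Phi$. Right-to-left: a $T$-satisfying interpretation in $T$ is, by uniformity, of the form $(D,\nu,\rho,\phi)$ for some $[\var{\Sigma},D]$ valuation $\nu$, and part~\ref{l:t3:2} then yields that this $\nu$ is a solution to the GCSP $\Phi$.

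I do not anticipate any real obstacles: every step is a direct unfolding of the definitions of uniform theory, GCSP satisfaction, and restriction-formula satisfaction. The only thing worth being careful about is the identification convention between interpretations $(D,\nu,\rho,\phi)\in T$ and their valuation components $\nu$, which should be made explicit at the start of the proof so that the statements of parts~\ref{l:t3:1}--\ref{l:t3:3} parse cleanly.
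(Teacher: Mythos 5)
Your proposal is correct and follows essentially the same route as the paper's proof: part~\ref{l:t3:1} is read off from condition~\ref{l:defu2} of Definition~\ref{ref:defu}, part~\ref{l:t3:2} is the observation that both satisfaction notions unfold to the same pointwise check of each constraint literal against $\rho$ and $\phi$, and part~\ref{l:t3:3} is derived as a corollary of the first two. Your added remark about explicitly invoking uniformity to identify the $\rho,\phi$ of the theory's interpretations with those of the lexicon is a reasonable clarification of a step the paper leaves implicit.
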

 \begin{proof}
 	Statement~\ref{l:t3:1} trivially follows from the condition~\ref{l:defu2} of the definition of uniform theories.
 	
 	Proof of Statement~\ref{l:t3:2}. 
 	By Statement~\ref{l:t3:1}, interpretation $\nu$ is in $T$.
 	By definition of a solution to GCSP,  $[\var{\Sigma},D]$ valuation $\nu$ is a solution to GCSP $\Phi$ over lexicon $\cL$ if and only if $\nu$ is a solution to every constraint in $\Phi$. In other words, $\nu$ satisfies every constraint literal in $\Phi$ with respect to~$\rho$ and~$\phi$. By definition of interpretations satisfying formulas, the previous statement holds if and only if interpretation $\nu$ in $T$ satisfies $\Phi$.
 	
 	Proof of Statement~\ref{l:t3:3}. GCSP $\Phi$ over lexicon $\cL$ has a solution if and only if 
 	there is a  $[\var{\Sigma},D]$ valuation $\nu$ that is a solution to GCSP $\Phi$ over lexicon $\cL$.
 	By statements~\ref{l:t3:1} and~\ref{l:t3:2}, the previous statement holds if and only if there is an interpretation $\nu$ in $T$ that satisfies $\Phi$ and consequently  $\Phi$ is $T$-satisfiable. 
 \end{proof}


Such commonly used  theories in SMT  as  
{\em linear real arithmetic},  {\em linear integer arithmetic}, and {\em integer difference logic} are uniform.\footnote{For instance, the SMT solver {\sc cvc4}~\cite{cvc4} supports all three mentioned arithmetics.} To be more precise,
recall notions of numeric, integer lexicons, and (integer) linear constraints
  presented in Section~\ref{sec:licon}. Given these notions:
\begin{itemize}
\item Linear real arithmetic is an example of a 
  uniform theory over a numeric lexicon. This arithmetic poses syntactic conditions on restriction formulas that it interprets. Namely, literals in these restriction formulas must correspond to linear constraints.
  \item  Similarly, linear integer arithmetic and  integer difference logic
  are
   examples of   uniform theories over integer lexicons. 
   Literals in restriction formulas in these arithmetics must correspond to integer linear constraints. Furthermore,
   the difference logic is a special case of linear integer arithmetic posing yet additional syntactic restrictions~\cite{nie05}.
\end{itemize} 
From Theorem~\ref{thm:smtcas}, it follows that  restriction formulas in linear real arithmetic can be seen as  linear constraint satisfaction problems (and the other way around). 
Similar relation holds between restriction formulas in linear integer arithmetic and integer linear constraint satisfaction problems.

\section{SMT Formulas and ASPT Programs}\label{sec:smtformulas}
First, this section introduces 
 SMT formulas that merge the concepts of propositional formulas and $\Sigma$-theories. Second, it introduces
 ASPT  programs that merge the concepts of logic programs and $\Sigma$-theories.
 It turns out that if considered $\Sigma$-theories  are uniform then formalisms of constraint formulas and SMT formulas coincide. A similar relation holds of constraint answer set programs and ASPT programs. This link is discussed in the next section.

As in Section~\ref{sec:casp2}, we understand
 $\sigma_r$ and $\sigma_i$ as two disjoint vocabularies and refer to their elements as regular and irregular.

\begin{definition}
An {\em SMT formula} $P$ over vocabulary $\sigma=\sigma_r\cup\sigma_i$ is a triple
$\langle  F, T, \mu\rangle$, where
\begin{itemize}
\item $F$ is a propositional formula  over $\sigma$,
\item $T$ is a $\Sigma$-theory, and
\item $\mu$ is an injective function from irregular atoms~$\sigma_i$ to constraint literals over c-vocabulary $[\Sigma,\emptyset]$.
\end{itemize}
For an SMT formula $\langle  F, T, \mu\rangle$ over  $\sigma$, 
a set $X\subseteq\At(F)$ is its
{\em model} if 
\begin{enumerate}
\item $X$ is a model of $F$, and 
\item the following restriction formula 
\[\{\mu(a) | a\in X\cap\sigma_i\}\cup
\{\neg{\mu(a)} | a\in  (\At(F)\cap \sigma_i)\setminus X\}.\]
is satisfiable in $\Sigma$-theory $T$.
\end{enumerate}
\end{definition}
In the literature on SMT, a more sophisticated syntax than SMT formulas provide is typically discussed. 
Yet, SMT solvers often rely on the so-called propositional abstractions of predicate logic formulas~\citep[Section 26.2.1.4]{BarretSST08} or~\citep[Section 1.1]{BarTin-14}, which, in their most commonly used case, coincide with 
SMT formulas  discussed here.

We now define the concept of logic programs modulo theories.
\begin{definition}
A {\em logic program modulo theories} (or {\em ASPT program}) 
$P$ over vocabulary $\sigma=\sigma_r\cup\sigma_i$ is a triple
$\langle  \Pi, T, \mu\rangle$, where
\begin{itemize}
\item
$\Pi$ is a logic program over $\sigma$  such that $\hd(\Pi)\cap\sigma_i=\emptyset$, 
\item  $T$ is a $\Sigma$-theory,
and
\item $\mu$ is an injective function from irregular atoms $\sigma_i$ to constraint literals over c-vocabulary $[\Sigma,\emptyset]$.
\end{itemize}
For an ASPT program $\langle  \Pi, T, \mu\rangle$ over  $\sigma$, 
a set $X\subseteq\At(\Pi)$ is its
{\em answer set} if 
\begin{enumerate}
\item $X$ is an input answer set of $\Pi$ relative to $\sigma_i$, and 
\item the following restriction formula 
\[\{\mu(a) | a\in X\cap\sigma_i\}\cup
\{\neg{\mu(a)} | a\in  (\At(\Pi)\cap \sigma_i)\setminus X\}.\]
is satisfiable in $\Sigma$-theory $T$.
\end{enumerate}
\end{definition}

In the case of uniform theories, we can extend the definition of SMT formula given a constraint $\Sigma$-theory $T$ over lexicon $([\Sigma,D],\rho,\phi)$ as follows: an {\em SMT formula} $P$ over vocabulary $\sigma=\sigma_r\cup\sigma_i$ is a triple
$\langle  F, T, \mu\rangle$, where~$F$ is a propositional formula  over~$\sigma$,~$T$ is a $\Sigma$-theory, and~$\mu$ is an injective function from irregular atoms $\sigma_i$ to constraint literals over c-vocabulary $[\Sigma,D]$.
 Note how the only difference in this definition is that function~$\mu$ refers to domain $D$ of lexicon in place of an empty set.
The definition of ASPT program can be extended in the same style.
For the case of uniform theories, we will assume the definitions of SMT formulas
and ASPT programs as stated in this paragraph. 

\section{SMT Formulas versus Constraint Formulas and ASPT versus CAS Programs}\label{sec:relation}
The framework of uniform theories brings us to a straightforward relation between SMT formulas  over uniform theories and constraint formulas  and between CAS programs and ASPT programs.
We now formalize these statements.

Let $\cL$ denote a lexicon~$([\Sigma,D],\rho,\phi)$. 
By $\cB_\cL$ we denote the set of all constraints over~$\cL$. By  $T_\cL$ we denote the uniform $\Sigma$-theory over $\cL$. 
\begin{theorem}\label{thm:smt-cf}
For  a lexicon $\cL=([\Sigma,D],\rho,\phi)$,
 a vocabulary $\sigma=\sigma_r\cup\sigma_i$, and a set $X$ of atoms over~$\sigma$, 	set~$X$ is a model of  SMT formula $\langle F,T_\cL,\mu\rangle$ over $\sigma$
 	if and only if $X$ is a model of a constraint formula $\langle F,\cB_\cL,\mu\rangle$ over $\sigma$ (where $\mu$ is identified with the function from irregular atoms to constraints over $\cL$ in a trivial way.)
\end{theorem}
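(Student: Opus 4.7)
The plan is to unfold the definitions of "model" on both sides and observe that the two conditions align component-wise. Both the SMT-formula definition and the constraint-formula definition demand (i)~that $X$ be a model of the propositional formula $F$, and (ii)~a statement about a set of constraint literals that is literally the same set, namely
\[
\Phi_X \;=\; \{\mu(a) \mid a\in X\cap\sigma_i\}\cup \{\neg\mu(a) \mid a\in (\At(F)\cap\sigma_i)\setminus X\}.
\]
The first condition is identical in both definitions, so there is nothing to prove there. The entire content of the theorem reduces to showing that the second conditions coincide, namely: $\Phi_X$ is $T_\cL$-satisfiable if and only if the GCSP $\Phi_X$ over $\cL$ has a solution.

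First I would justify that it is legitimate to read $\Phi_X$ simultaneously as a restriction formula and as a GCSP. Under the extended SMT-formula definition given at the end of Section~\ref{sec:smtformulas} for uniform theories, $\mu$ maps irregular atoms to constraint literals over $[\Sigma,D]$; these are exactly the objects that play the role of constraints over $\cL$ in the constraint-formula definition. Hence $\Phi_X$ is syntactically a finite set of constraint literals over $[\Sigma,D]$, which, depending on context, is either a restriction formula over $\Sigma$ (interpreted via $T_\cL$) or a GCSP over $\cL$.

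Next, I would invoke Theorem~\ref{thm:smtcas}, item~\ref{l:t3:3}, applied to the set $\Phi_X$: since $T_\cL$ is by construction the uniform $\Sigma$-theory over $\cL$, the GCSP $\Phi_X$ over $\cL$ has a solution if and only if the restriction formula $\Phi_X$ is $T_\cL$-satisfiable. Combining this with the matched first conditions on $F$, the biconditional in the theorem statement follows immediately in both directions.

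I do not expect any real obstacle; the work has already been done in Theorem~\ref{thm:smtcas}. The only subtle point worth flagging explicitly in the write-up is the identification of $\mu$ across the two definitions: in the original SMT-formula definition $\mu$ targets constraint literals over $[\Sigma,\emptyset]$, whereas the constraint-formula definition requires constraints over $\cL$. The parenthetical remark in the theorem statement, together with the extended SMT-formula definition for uniform theories, makes this identification canonical, and it is precisely what allows the one-line appeal to Theorem~\ref{thm:smtcas} to close the argument.
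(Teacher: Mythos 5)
Your proposal is correct and follows essentially the same route as the paper's own proof: both unfold the two model definitions, note that the first conditions coincide, and reduce the second conditions to Statement~\ref{l:t3:3} of Theorem~\ref{thm:smtcas} applied to the common set of constraint literals. Your explicit remark on identifying $\mu$ across the $[\Sigma,\emptyset]$ and $[\Sigma,D]$ settings is a welcome clarification that the paper handles only via the parenthetical in the theorem statement.
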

\begin{proof}
Let $X$ be a subset of $\At(F)$. 
	Set $X$ is a model of an SMT formula $\langle F,T_\cL,\mu\rangle$ over  $\sigma$
	if and only if 
	\begin{enumerate}
		\item $X$ is a model of $F$, and 
		
		\item\label{l:thm4:3} 
		the following restriction formula 
		\beq\{\mu(a) | a\in X\cap\sigma_i\}\cup
		\{\neg{\mu(a)} | a\in  (\At(F)\cap \sigma_i)\setminus X\}.
		\eeq{eq:thmp2}
		is satisfiable in $\Sigma$-theory $T_\cL$.		
	\end{enumerate}
	By the definition of a model of a constraint formula, to conclude the proof, we only have to illustrate that condition~\ref{l:thm4:3} holds if and only if 
	the GCSP~\eqref{eq:thmp2} over~$\cL$ 
	has a solution.
 By Statement~\ref{l:t3:3} of Theorem~\ref{thm:smtcas} we conclude that GSCP~\eqref{eq:thmp2} has a solution if and only if restriction formula~\eqref{eq:thmp2}
	is satisfiable in $\Sigma$-theory $T_\cL$ (which is the case if and only if condition~\ref{l:thm4:3} holds). 
\end{proof}

This theorem illustrates that for uniform theories the languages of SMT formulas and constraint formulas practically coincide. In other words,  constraint formulas is a special case of SMT formulas that are defined over uniform theories. It is obvious that a similar relation between CAS and ASPT programs holds.
\begin{theorem}\label{thm:cas-aspt}
For  a lexicon $\cL=([\Sigma,D],\rho,\phi)$,
 a vocabulary $\sigma=\sigma_r\cup\sigma_i$, and a set $X$ of atoms over~$\sigma$, 	set~$X$ is an answer set of  ASPT program $\langle \Pi,T_\cL,\mu\rangle$ over $\sigma$
 	if and only if $X$ is an answer set of a CAS program $\langle \Pi,\cB_\cL,\mu\rangle$ over $\sigma$ (where $\mu$ is identified with the function from irregular atoms to constraints over $\cL$ in a trivial way.) 	
\end{theorem}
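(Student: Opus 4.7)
My plan is to unfold the two definitions side by side and exploit Theorem~\ref{thm:smtcas} in exactly the same way that Theorem~\ref{thm:smt-cf} did for SMT formulas vs.\ constraint formulas. Specifically, I would fix a lexicon $\cL=([\Sigma,D],\rho,\phi)$ and observe that both ``$X$ is an answer set of the ASPT program $\langle \Pi,T_\cL,\mu\rangle$'' and ``$X$ is an answer set of the CAS program $\langle \Pi,\cB_\cL,\mu\rangle$'' are defined by two conditions. The first condition is the same in both cases: $X$ must be an input answer set of $\Pi$ relative to $\sigma_i$. So there is nothing to prove for condition~1.

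The work is therefore in condition~2. In the ASPT definition this condition requires that the restriction formula
\[
\Phi_X \;=\; \{\mu(a)\mid a\in X\cap\sigma_i\}\cup\{\neg\mu(a)\mid a\in(\At(\Pi)\cap\sigma_i)\setminus X\}
\]
be satisfiable in the $\Sigma$-theory $T_\cL$. In the CAS definition the identical set $\Phi_X$, viewed as a GCSP over $\cL$ (using the identification of $\mu$ with a function into $\cB_\cL$, which is legitimate because constraint literals over $[\Sigma,\emptyset]$ are in particular constraint literals over $[\Sigma,D]$), is required to have a solution. Since $T_\cL$ is by definition the uniform $\Sigma$-theory over $\cL$, Statement~\ref{l:t3:3} of Theorem~\ref{thm:smtcas} applied to $\Phi_X$ yields precisely the equivalence of these two requirements.

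Combining the trivial equivalence of condition~1 with the equivalence of condition~2 supplied by Theorem~\ref{thm:smtcas} gives the biconditional of Theorem~\ref{thm:cas-aspt}. The only minor obstacle I anticipate is a small bookkeeping point: in the ASPT definition, $\mu$ maps irregular atoms to constraint literals over $[\Sigma,\emptyset]$, whereas in the CAS definition the function maps to constraints over $\cL$ (i.e., constraint literals over $[\Sigma,D]$). I would make this identification explicit at the start of the proof, noting that every constraint literal over $[\Sigma,\emptyset]$ is also a constraint literal over $[\Sigma,D]$ with the same satisfaction behavior, so that $\Phi_X$ has the same syntactic form on both sides and Theorem~\ref{thm:smtcas} applies verbatim. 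After that remark, the proof is essentially a two-line reduction to Theorem~\ref{thm:smtcas}, entirely parallel to the proof of Theorem~\ref{thm:smt-cf}.
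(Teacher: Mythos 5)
Your proposal is correct and matches the paper's own treatment: the paper proves Theorem~\ref{thm:cas-aspt} by stating that it ``follows the lines of proof of Theorem~\ref{thm:smt-cf},'' which is exactly the reduction you carry out --- condition~1 is literally identical in both definitions, and condition~2 is handled by Statement~\ref{l:t3:3} of Theorem~\ref{thm:smtcas} applied to the restriction formula viewed as a GCSP over $\cL$. Your explicit remark about identifying constraint literals over $[\Sigma,\emptyset]$ with those over $[\Sigma,D]$ is a sensible bookkeeping step that the paper handles by its extended definitions of SMT formulas and ASPT programs for uniform theories.
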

The proof for this theorem 	follows the lines of proof of Theorem~\ref{thm:smt-cf}.

\begin{example}
Recall Examples~\ref{ex:casp} and~\ref{ex:constraint-formula}. 
Let $T_{\cL_3}$ denote the uniform theory over integer lexicon $\cL_3$ defined in~Example~\ref{ex:casp}. 

By Theorem~\ref{thm:cas-aspt}, CAS program $P_1=\langle \Pi_1,\cB_{\cL_3},\gamma_1 \rangle$ 
from Example~\ref{ex:casp}
is essentially the same entity as ASPT program  $\langle \Pi_1,T_{\cL_3},\gamma_1 \rangle$. These programs have the same answer sets.

	Similarly, by Theorem~\ref{thm:smt-cf},
constraint formula
\[\cF_1 = \langle IComp(\Pi_1,\{|x<12|,|x \geq 12|,|x<0|,|x>23|\}),\cB_{\cL_3},\gamma_1\rangle\]
from Example~\ref{ex:constraint-formula}
is essentially the same entity as SMT formula
$$\langle IComp(\Pi_1,\{|x<12|,|x \geq 12|,|x<0|,|x>23|\}),T_{\cL_3},\gamma_1 \rangle.$$
\end{example}


We call any SMT formula $\langle  F, T, \mu\rangle$ over $\sigma_r\cup\sigma_i$
\begin{itemize}
\item an {\em SMT(IL)~ formula}  
if $T$ is the uniform theory over an integer  lexicon
and $\mu$ maps irregular atoms $\sigma_i$ into integer linear constraints;
\item  an {\em SMT(DL) formula}  
if $T$ is the uniform theory over an integer lexicon and 
$\mu$ maps irregular atoms $\sigma_i$ into difference logic constraints;
\item an {\em SMT(L)~ formula}  
if $T$ is the uniform theory over a numeric  lexicon
and $\mu$ maps irregular atoms $\sigma_i$ into linear constraints.
\end{itemize} 
In the same style, we can define ASPT(IL),  ASPT(DL), and ASPT(L) programs.


From Theorem~\ref{thm:cas-aspt}, CAS programs of the form $\langle \Pi,\cB_\cL,\gamma\rangle$, where $\cL$ is a numeric lexicon  and $\cB_\cL$ is the set of all linear constraints over $\cL$, are essentially the same objects as  ASPT(L) programs.
Similarly, it follows that CAS programs of the form $\langle \Pi,\cB_\cL,\gamma\rangle$, where $\cL$ is an integer lexicon and $\cB_\cL$ is the set of all integer linear constraints over $\cL$, are essentially the same objects as  ASPT(IL) programs.


Obviously, Theorems~\ref{thm:casptight} and~\ref{thm:smt-cf} pave the way for using SMT systems that solve SMT(IL) and SMT(L) problems as is for solving  tight ASPT(IL) and ASPT(L) programs respectively. It is sufficient to compute the input completion of the program relative to irregular atoms.  \citeauthor{sus16b}~\shortcite{sus16b} utilized this method in implementing SMT-based solver for tight programs called {\sc ezsmt}. 
A similar observation has been used in work by Lee and Meng~\shortcite{lee13} and Janhunen~et al.~\shortcite{jan11}. Furthermore, Janhunen~et al. propose a translation of ASPT(DL) programs into SMT(DL) formulas. System~{\sc dingo} utilizes this translation by invoking SMT solver {\zThree} for finding models for ASPT(DL) programs.

\subsection{Outlook on Constraint Answer Set Solvers}\label{sec:caspil} 
We now relate various constraint answer set solvers by specifying which ASPT languages these solvers support.
We also provide a brief overview of a variety of solving techniques that they use.

Table \ref{tbl:casp-solvers} presents the landscape of current constraint answer set solvers. The star $*$ annotating language ASPT(IL) denotes that the solver supporting this language requires the specification of finite ranges for its variables (since finite-domain constraint solvers are used as underlying solving technology).
Symbol $\dagger$ that annotates the languages supported by system {\sc mingo}
represents the fact that this solver  supports lexicons that are not covered by our framework. Its variables within the same program can be of two different kind: either integer or real. 

\begin{table}[h]
\caption{Solvers Categorization}\label{tbl:casp-solvers}
\begin{tabular}{ll}
\hline \hline

Solver&Language\\
\hline
{\sc clingcon}~\cite{geb09}& ASPT(IL)$^*$\\
 {\sc mingo}~\cite{liu12}&ASPT(IL)$^\dagger$\\
 &ASPT(L)$^\dagger$\\
{\sc dingo}~\cite{jan11} &ASPT(DL)\\
{\sc ezcsp}~\cite{bal09a}&ASPT(IL)$^*$\\
&ASPT(IL)\\
&ASPT(L)\\
{\sc ezsmt}~\cite{sus16b}&ASPT(IL)\\
&ASPT(L)\\
\hline \hline
\end{tabular}
\end{table}

At a high-level abstraction, one may summarize the architectures of the {\sc clingcon} and {\sc ezcsp} solvers as 
{\em ASP-based solvers plus theory solver}. Given a CAS program $\langle \Pi,\cB,\gamma\rangle$,
both {\sc clingcon} and {\sc ezcsp} first use an answer set solver to partially compute an input answer set of~$\Pi$. Second, they contact a theory solver to verify whether respective constraint satisfaction problem has a solution. In case of {\sc clingcon},
  finite domain constraint solver {\sc gecode} is used as a theory solver.
  System {\sc ezcsp} uses constraint logic programming tools such as {\sc Bprolog}~\cite{b-prolog}, {\sc SICStus prolog}~\cite{sicstus-prolog}, and {\sc swi prolog}~\cite{swi-prolog}. These tools provide {\sc ezcsp} with the ability to work with three different kinds of constraints: finite-domain integer, integer linear, and linear constraints. 
  To process ASPT(IL) and ASPT(L) programs, the solver~{\sc mingo} translates these programs into mixed integer programming expressions and then uses the solver {\sc cplex}~\cite{cplex}
to solve these formulas.
To process ASPT(DL) programs {\sc dingo}  translates these programs into  SMT(DL) formulas and applies the SMT solver {\sc z3}~\cite{z3} to find their models.
The {\sc ezsmt} solver is only capable of processing tight programs. It computes clausified input completion for such CAS programs, encodes resulting SMT formula in 
SMT-LIB language~\cite{smt15} and is then able to invoke any SMT solver that supports SMT-LIB (for instance, SMT solvers {\sc z3}~\cite{z3} or {\sc cvc4}~\cite{cvc4}) to compute its models.

The diversity of solving approaches used in CASP paradigms suggests that solutions of the kind are available for SMT technology. Typical SMT architecture is in a style of systems {\sc clingcon} and {\sc ezcsp}. 
At a high-level abstraction, one may summarize common  architectures of SMT solvers as 
satisfiability-based solvers augmented with theory solvers.
Theory solvers are usually implemented within an SMT solver and are as such custom solutions.
The fact that {\sc clingon} and {\sc ezcsp} use tools available from the constraint programming community suggests that these tools could be of use in SMT community also. The solution exhibited by system {\sc mingo}, where mixed integer programming is used for solving ASPT(L) and ASPT(IL) programs, hints  that a similar strategy can be implemented for solving SMT(L) formulas. 
These ideas have  been explored by~\citeauthor{king14}~\shortcite{king14}.

\section{SMT Solvers for Nontight Programs via Level Rankings}\label{sec:nontight}

As mentioned in section on preliminaries, \citeauthor{nie08}~\shortcite{nie08} characterized answer sets of normal logic programs in terms of level rankings. He then developed a mapping from such programs to SMT(DL) formalism.
Mappings of the kind were  exploited in the design of solvers \dingo~\cite{jan11} and \mingo~\cite{liu12}.
  In this section, we devise  a similar mapping and show how it provides means for using SMT solvers for processing non-tight ASPT(IL) programs. 
  
  We start by  defining  level ranking relative to an input vocabulary and then lifting the result of Theorem~\ref{thm:ans-iff-lr} to the notion of input answer set.
 
 \begin{definition}\label{def:lrinput}
  A function $\lr: X\setminus\iota \rightarrow \bN$ is a {\em level ranking} of $X$ for $\Pi$ relative to vocabulary $\iota\subseteq\sigma$ so that $\hd(\Pi)\cap\iota=\emptyset$,  when  for each atom $a$ in $X\setminus\iota$ the following condition hold: 
  there is $B$ in $Bodies(\Pi,a)$ such that~$X$ satisfies $B$  and 
   	 for every $b \in B^+\setminus\iota$ it holds that $\text{\lr}(a) - 1 \geq \text{lr}(b)$.
  \end{definition}

 \begin{theorem}\label{thm:casp-ans-iff-lr}
	\label{thm:slr}
	For a program $\Pi$ over vocabulary $\sigma$, vocabulary $\iota\subseteq\sigma$ so that $\hd(\Pi)\cap\iota=\emptyset$, and a set~$X$ of atoms over $\sigma$ that is a model of input completion $IComp(\Pi,\iota)$, $X$ 
 is an input answer set of $\Pi$ relative to $\iota$ if and only if  there is a level ranking of $X$ for $\Pi$ relative to $\iota$.
\end{theorem}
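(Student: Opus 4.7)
The plan is to reduce the statement to Theorem~\ref{thm:ans-iff-lr} via Definition~\ref{def:input-answer-set} and Lemma~\ref{lem:comp}. The key observation is that $X$ is an input answer set of $\Pi$ relative to $\iota$ iff $X$ is an answer set of $\Pi \cup (X \cap \iota)$, and by Lemma~\ref{lem:comp} the assumption that $X \models IComp(\Pi,\iota)$ is equivalent to $X \models Comp(\Pi \cup (X \cap \iota))$. Thus Theorem~\ref{thm:ans-iff-lr} applies to $\Pi \cup (X \cap \iota)$ and tells us that $X$ is an answer set of this program iff there is a level ranking (in the sense of Definition~\ref{def:lr}) of $X$ for $\Pi \cup (X \cap \iota)$. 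So it suffices to prove the following bridge lemma: there is a level ranking of $X$ for $\Pi \cup (X \cap \iota)$ in the sense of Definition~\ref{def:lr} if and only if there is a level ranking of $X$ for $\Pi$ relative to $\iota$ in the sense of Definition~\ref{def:lrinput}.

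For the forward direction, assume $\mathrm{lr}$ is a level ranking of $X$ for $\Pi \cup (X \cap \iota)$. I would restrict $\mathrm{lr}$ to $X \setminus \iota$ and claim the result is a level ranking of $X$ for $\Pi$ relative to $\iota$. For any $a \in X \setminus \iota$, pick the body $B \in Bodies(\Pi \cup (X \cap \iota), a)$ witnessing the ranking condition. Since $\hd(\Pi) \cap \iota = \emptyset$ and the rules added in $(X \cap \iota)$ are facts for atoms in $\iota$, and $a \notin \iota$, we have $Bodies(\Pi \cup (X \cap \iota), a) = Bodies(\Pi, a)$, so $B \in Bodies(\Pi,a)$. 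The inequality $\mathrm{lr}(a) - 1 \geq \mathrm{lr}(b)$ for all $b \in B^+$ clearly implies the weaker requirement only for $b \in B^+ \setminus \iota$.

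For the backward direction, assume $\mathrm{lr}$ is a level ranking of $X$ for $\Pi$ relative to $\iota$. I would extend $\mathrm{lr}$ to all of $X$ by setting $\mathrm{lr}'(a) = 0$ for $a \in X \cap \iota$ and $\mathrm{lr}'(a) = \mathrm{lr}(a) + 1$ for $a \in X \setminus \iota$. For $a \in X \cap \iota$, the fact $a \leftarrow$ is in $\Pi \cup (X \cap \iota)$ with empty body, so the condition is vacuous. For $a \in X \setminus \iota$, take the witnessing body $B \in Bodies(\Pi, a)$; it remains in $Bodies(\Pi \cup (X \cap \iota), a)$ and is satisfied by $X$. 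For $b \in B^+ \setminus \iota$ we have $\mathrm{lr}'(a) - 1 = \mathrm{lr}(a) \geq \mathrm{lr}(b) + 1 = \mathrm{lr}'(b)$ -- wait, let me redo: $\mathrm{lr}'(a) - 1 = \mathrm{lr}(a)$ and we need this $\geq \mathrm{lr}'(b) = \mathrm{lr}(b) + 1$, which requires $\mathrm{lr}(a) \geq \mathrm{lr}(b) + 1$, i.e., $\mathrm{lr}(a) - 1 \geq \mathrm{lr}(b)$, as assumed. For $b \in B^+ \cap \iota$, we need $\mathrm{lr}'(a) - 1 \geq \mathrm{lr}'(b) = 0$, which holds since $\mathrm{lr}'(a) \geq 1$.

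The main obstacle will be getting the arithmetic shift right so that both kinds of body atoms (those in $\iota$ and those outside) satisfy the ranking inequality simultaneously; the $+1$ offset on $X \setminus \iota$ combined with the $0$ assignment on $X \cap \iota$ is what makes everything fit. The only other subtle point is the invariance of $Bodies(\Pi,a) = Bodies(\Pi \cup (X \cap \iota), a)$ for $a \notin \iota$, which is ensured by the standing hypothesis $\hd(\Pi) \cap \iota = \emptyset$.
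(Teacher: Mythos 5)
Your proposal is correct and takes essentially the same route as the paper's proof: reduce to Theorem~\ref{thm:ans-iff-lr} for $\Pi\cup(X\cap\iota)$ via Lemma~\ref{lem:comp}, restrict the ranking to $X\setminus\iota$ in one direction, and extend it with the same $+1$ shift and value $0$ on $X\cap\iota$ in the other. The arithmetic you settled on after the mid-proof correction matches the paper's, and the $Bodies$ invariance for $a\notin\iota$ is exactly the observation the paper uses.
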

\begin{proof} 
	Consider a set $X$ of atoms over $\sigma$ that is a model of  $IComp(\Pi,\iota)$.
	By Lemma~\ref{lem:comp} it follows that
	$X$ is a model of 
	\beq
	Comp(\Pi\cup (X\cap\iota)).
	\eeq{eq:compcomplex}
	
Left-to-right:	Assume set~$X$  is an input answer set of $\Pi$ relative to $\iota$. By the definition of an input answer set, $X$ is an answer set of the program 
\beq
\Pi\cup (X\cap\iota).
 \eeq{eq:prc}
By Theorem~\ref{thm:ans-iff-lr}, it follows that there is a  level ranking of $X$ for program~\eqref{eq:prc}.
Thus, by Definition~\ref{def:lr} there is a function \lr: $X \rightarrow \mathbb{N}$ such that for each $a \in X$, there is $B$ in $Bodies(\Pi\cup (X\cap\iota),a)$ such that~$X$ satisfies $B$  and 
for every $b \in B^+$ it holds that $\lr(a) - 1 \geq \lr(b)$.
It is easy to see that for each 
atom $a$ in $X\setminus\iota$, 
\beq
Bodies(\Pi\cup (X\cap\iota),a)=Bodies(\Pi,a).
\eeq{eq:bodieseq}
Thus, for each atom $a$ in $X\setminus\iota$
there is $B$ in $Bodies(\Pi,a)$ such that~$X$ satisfies $B$ and for every $b \in B^+$ it holds that $\lr(a) - 1 \geq \lr(b)$, and hence for every $b \in B^+\setminus\iota$ it holds that $\lr(a) - 1 \geq \lr(b)$. 
By Definition~\ref{def:lrinput},
 \lr (seen as a function from $X\setminus\iota$ to  $\bN$) is  a level ranking of $X$ for $\Pi$ relative to $\iota$.

Right-to-left: Assume there is a  level ranking \lr of $X$ for $\Pi$ relative to $\iota$.
 By Definition~\ref{def:lrinput},
 for each atom $a$ in $X\setminus\iota$
 there is $B$ in $Bodies(\Pi,a)$ such that~$X$ satisfies $B$ and  for every $b \in B^+\setminus\iota$ it holds that $\lr(a) - 1 \geq \lr(b)$.
 
 We now construct  function $\lr': X \rightarrow \bN$ based of \lr. 
 We then illustrate that $\lr'$ is a level ranking of $X$ for program~\eqref{eq:prc}. 
 
   For every atom $a$ in $X\cap\iota$, $\lr'(a)=0$.   
 For every atom $a$ in $X\setminus\iota$, $\lr'(a)=\lr(a)+1$.  
 It is easy to see that 
 $$
 (X\cap\iota)~\cup~(X\setminus\iota)=X
\hbox{ as well as }
 (X\cap\iota)~\cap~(X\setminus\iota)=\emptyset.
 $$
 Thus $\lr'$ is a function from $X$ to  $\mathbb{N}$.
 Consider two cases.
 
 Case 1. Atom $a$ in  
 $X\cap\iota$. Since $\hd(\Pi)\cap\iota=\emptyset$,  $a$ does not appear in any head in $\Pi$ (i.e., $a\not\in\hd(\Pi)$) and thus it only appears as a fact in
  $\Pi\cup (X\cap\iota)$. Trivially, level ranking condition for this atom is satisfied.
  
  Case 2. Atom $a$ in $X\setminus\iota$. It is easy to see that for such an atom equality~\eqref{eq:bodieseq}
  holds.
  We are given that  there is $B$ in $Bodies(\Pi,a)$ such that~$X$ satisfies $B$ and  for every $b \in B^+\setminus\iota$ it holds that $\lr(a) - 1 \geq \lr(b)$.
  From $\lr'$ construction, it follows that   
   for every $b \in B^+\setminus\iota$ it holds that $\lr'(a) - 1 \geq \lr'(b)$.
  On the other hand, for every atom 
  $b$ in $B^+$ that it is not in $B^+\setminus\iota$, such $b$ is in  
  $X\cap\iota$.
  By $\lr'$ construction, $\lr'(b)=0$ and
  $\lr'(a)=\lr(a)+1$. Since $\lr$ is a mapping to natural numbers we derive that $\lr(a)\geq 0$ and hence $\lr'(a)\geq 1$. Thus,  $\lr'(a)-1\geq \lr'(b)$ holds.
This concludes our illustration that for every atom $a\in X$ the level ranking condition holds given function~$\lr'$.   
 
By Theorem~\ref{thm:ans-iff-lr}  we conclude that $X$ is an answer set of program~\eqref{eq:prc}.
Consequently,
$X$ is an input answer set of  $\Pi$ relative to $\iota$.
\end{proof}

We now present a mapping from a logic program to SMT(IL) formula inspired by the mapping introduced by
\citeauthor{nie08}~\shortcite{nie08}.
The mapping $T(\Pi,\iota)$ consists of input-completion $IComp(\Pi,\iota)$
and ranking-formula $R(\Pi,\iota)$ defined below.
Sometimes, we refer to the set of formulas as a formula, where we understand such a formula as a conjunction of the members of its set.

In defining $R(\Pi,\iota)$, we will use the convention discussed earlier  so that vertical bars  mark the irregular atoms that  have intuitive mappings into respective integer linear constraints. For instance, expression $|lr_b-1\geq lr_a|$ introduces an irregular atom that is intuitively mapped into integer linear constraint
$lr_b-1\geq lr_a$, where $lr_a$ and $lr_b$ are variables over integers.
Expression $R(\Pi,\iota)$ is a conjunction of formulas constructed as follows:
for each atom $a\in\sigma\setminus\iota$  

	\beq
	a\rar \bigvee_{a\ar B\in \Pi \hbox{ and } B^+\setminus\iota\neq\emptyset}(B\wedge
	\bigwedge_{b\in B^+\setminus\iota}	
	|lr_a-1\geq lr_{b}|
	)\vee 
	\bigvee_{a\ar B\in \Pi \hbox{ and } B^+\setminus\iota=\emptyset}B		
	\eeq{eq:lrr}
	By $\sigma_i^{R(\Pi,\iota)}$ we denote the set of all irregular atoms of the form $|lr_a-1\geq lr_{b}|$ in 
	$R(\Pi,\iota)$.
	By $\Sigma^{R(\Pi,\iota)}$ we denote the set of all 
	variables over integers occurring within $\sigma_i^{R(\Pi,\iota)}$. 
	By~$\gamma^{R(\Pi,\iota)}$ we denote a mapping from 
	irregular atoms in $\sigma_i^{R(\Pi,\iota)}$ to integer linear constraints  in accordance to their intuitive meanings. By 
		$\cB_{R(\Pi,\iota)}$ we denote  the set of all integer linear constraints over integer lexicon defined over the signature $\Sigma^{R(\Pi,\iota)}$.
	
\begin{theorem}\label{thm:ians-set-iff-smt-lr-sat}
For a program $\Pi$ over vocabulary $\sigma$, vocabulary $\iota\subseteq\sigma$ so that $\hd(\Pi)\cap\iota=\emptyset$, and a set~$X$ of atoms over  $\sigma$,
$X$ is an input answer set 
of $\Pi$ relative to $\iota$ if and only if 
there is a model $X\cup X_i$ of a constraint formula  (or SMT(IL) formula) 
\beq \langle IComp(\Pi,\iota)\wedge R(\Pi,\iota),\cB_{R(\Pi,\iota)},\gamma^{R(\Pi,\iota)}\rangle\eeq{eq:icompform}
over $\sigma\cup\sigma_i^{R(\Pi,\iota)}$ so that $X_i\subseteq\sigma_i^{R(\Pi,\iota)}$.		
\end{theorem}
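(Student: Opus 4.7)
The plan is to reduce this theorem to Theorem~\ref{thm:casp-ans-iff-lr}, which characterizes input answer sets of $\Pi$ relative to $\iota$ as those models of $IComp(\Pi,\iota)$ that admit a level ranking. Since $IComp(\Pi,\iota)$ is a conjunct of the propositional part of the constraint formula in~\eqref{eq:icompform}, whenever $X \cup X_i$ is a model of it, the restriction $X$ already satisfies the input-completion, and conversely $X \models IComp(\Pi,\iota)$ is a necessary ingredient of the answer-set condition by Theorem~\ref{thm:input-ans-set-iff-sat-icomp2}. The remaining content of the theorem amounts to showing that the combination of satisfaction of $R(\Pi,\iota)$ with the solvability of the associated GCSP is equivalent to the existence of a level ranking of $X$ for $\Pi$ relative to $\iota$ in the sense of Definition~\ref{def:lrinput}.

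For the left-to-right direction, I would first use Theorem~\ref{thm:input-ans-set-iff-sat-icomp2} to get $X \models IComp(\Pi,\iota)$ and Theorem~\ref{thm:casp-ans-iff-lr} to extract a level ranking $\lr$. I would then define an integer valuation $\nu$ by $\nu(lr_a) = \lr(a)$ for $a \in X \setminus \iota$ and $\nu(lr_a) = 0$ otherwise, and let $X_i$ consist of exactly those irregular atoms $|lr_a - 1 \geq lr_b|$ in $\sigma_i^{R(\Pi,\iota)}$ whose associated integer linear constraint is satisfied by $\nu$. The GCSP determined by this $X_i$ is then satisfied by $\nu$ by construction. To verify implication~\eqref{eq:lrr} for each $a \in X \cap (\sigma \setminus \iota)$, I would invoke the level-ranking property: there is $B \in Bodies(\Pi,a)$ with $X \models B$ and $\lr(a) - 1 \geq \lr(b)$ for every $b \in B^+ \setminus \iota$, so every atom $|lr_a - 1 \geq lr_b|$ with $b \in B^+ \setminus \iota$ is placed in $X_i$, making the corresponding disjunct of~\eqref{eq:lrr} true.

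For the converse, suppose $X \cup X_i$ is a model of~\eqref{eq:icompform} with GCSP solution $\nu$. Then $X \models IComp(\Pi,\iota)$, and for each $a \in X \cap (\sigma \setminus \iota)$ the implication~\eqref{eq:lrr} supplies a body $B \in Bodies(\Pi,a)$ such that $X \models B$ and, for every $b \in B^+ \setminus \iota$, the atom $|lr_a - 1 \geq lr_b|$ belongs to $X_i$, which by the GCSP condition forces $\nu(lr_a) - 1 \geq \nu(lr_b)$. Setting $\lr(a) = \nu(lr_a) - c$ for $a \in X \setminus \iota$, where $c$ is a sufficiently negative constant so that the image lies in $\bN$, produces a level ranking of $X$ for $\Pi$ relative to $\iota$; Theorem~\ref{thm:casp-ans-iff-lr} then concludes that $X$ is an input answer set.

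The main obstacle I expect is twofold. First, in the forward direction one must commit to a choice of $X_i$ that simultaneously witnesses $R(\Pi,\iota)$ and yields a GCSP solvable by the \emph{same} valuation; defining $X_i$ as exactly the truth-closure of $\nu$ over the relevant irregular atoms is the natural way to avoid any conflict, but verifying that each disjunct required by~\eqref{eq:lrr} is then available deserves careful treatment of the case where $B^+ \setminus \iota = \emptyset$ versus the general case. Second, in the converse the variables $lr_a$ range over $\bZ$ while Definition~\ref{def:lrinput} demands $\bN$; the shift by a constant is routine but must be flagged, since the conditions $\nu(lr_a) - 1 \geq \nu(lr_b)$ depend only on differences and are invariant under such shifts.
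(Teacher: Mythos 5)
Your proposal follows essentially the same route as the paper's proof: both directions reduce the claim to Theorem~\ref{thm:casp-ans-iff-lr}, constructing $X_i$ and the witnessing valuation directly from the level ranking in the forward direction and reading a level ranking off the GCSP solution in the converse. The one point where you are more careful than the paper is the codomain issue in the converse: the paper sets $\lr(a)=lr_a^\nu$ without comment even though $\nu$ ranges over $\bZ$ while Definition~\ref{def:lrinput} requires values in $\bN$, whereas your shift by a constant (legitimate because the relevant constraints are difference constraints, hence shift-invariant) repairs this.
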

\begin{proof}
	
Left-to-right: 
Assume $X$ is an input answer set of $\Pi$ relative to $\iota$. By Theorem~\ref{thm:input-ans-set-iff-sat-icomp2}
it follows that $X$ satisfies $IComp(\Pi,\iota)$.  By Theorem~\ref{thm:casp-ans-iff-lr} there is a level ranking \lr of $X$ relative to~$\iota$. Consider such $\lr$.
By the level ranking definition, for each atom $a$ in $X\setminus\iota$ the following condition holds, there is $B$ in $Bodies(\Pi,a)$ such that~$X$ satisfies $B$  and 
   	 for every $b \in B^+\setminus\iota$ it holds that $\text{\lr}(a) - 1 \geq \text{lr}(b)$.
We now construct set $X_i$ of atoms over $\sigma_i^{R(\Pi,\iota)}$ as follows:
atom  $|lr_a-1\geq lr_{b}|$ from $\sigma_i^{R(\Pi,\iota)}$ is in $X_i$ if and only if condition  $\text{\lr}(a) - 1 \geq \text{lr}(b)$ holds.

It follows immediately from the construction of $X_i$ that the GCSP which corresponds to $X\cup X_i$ (in accordance with the condition~2 in the definition of a model of a constraint formula) has a solution. Indeed, consider a valuation $\nu$ that assigns values to variables of the form $lr_a$ in $\Sigma^{R(\Pi,\iota)}$ based on level ranking function $\lr$. In particular, $\lr_a^\nu=\lr(a)$.

We are now left to  illustrate that $X\cup X_i$ is a model of $IComp(\Pi,\iota)\wedge R(\Pi,\iota)$. 
Since~$X$ is a model of $IComp(\Pi,\iota)$, we are only left to show that
  $X\cup X_i$ is a model of $R(\Pi,\iota)$ or, in other words, that $X\cup X_i$ satisfies~\eqref{eq:lrr} for every atom $a\in\sigma\setminus\iota$.

Consider any atom $a\in\sigma\setminus\iota$.

Case 1. $a\not \in X$.
Obviously, $a\not\in X_i$ since $X_i\cap\sigma=\emptyset$. Then, $X\cup X_i$ trivially satisfies~\eqref{eq:lrr}.

Case 2.  $a \in X\setminus \iota$. 
Since \lr is a level ranking  of $X$ relative to $\iota$ it follows that 
 there is $B\in Bodies(\Pi,a)$ so that $X\models B$ and for every $b\in B^+\setminus\iota$ it holds that $\text{\lr}(a) - 1 \geq \text{lr}(b)$. Take such $B$.

Case 2.1 $B^+\setminus\iota=\emptyset$. Then, $X$  satisfies~\eqref{eq:lrr} due to 
$$	\bigvee_{a\ar B\in \Pi \hbox{ and } B^+\setminus\iota=\emptyset}B
	$$
	term in the right hand side of the implication~\eqref{eq:lrr} and the fact that $X$ satisfies considered~$B$.
Consequently, $X\cup X_i$  satisfies~\eqref{eq:lrr}.

Case 2.2 $B^+\setminus\iota\neq\emptyset$.
 From the construction of $X_i$,  it follows that for every 
$b\in B^+\setminus\iota$ there is an irregular atom of the form $|\text{\lr}(a) - 1 \geq \text{lr}(b)|$ in $X_i$. 
Then, $X\cup X_i$  satisfies~\eqref{eq:lrr} due to 
$$
\bigvee_{a\ar B\in \Pi \hbox{ and } B^+\setminus\iota\neq\emptyset}(B\wedge
\bigwedge_{b\in B^+\setminus\iota}	
|lr_a-1\geq lr_{b}|
)
$$
term in the right hand side of the implication~\eqref{eq:lrr}.

Right-to-left: Let $X\cup X_i$ be a model of constraint formula~\eqref{eq:icompform}.
It immediately follows that $X$ is a model of $IComp(\Pi,\iota)$.
We now show that one can construct  a level ranking \lr of $X$ for $\Pi$ relative to $\iota$ using $X_i$. Indeed, consider a GCSP that corresponds to $X\cup X_i$  (in accordance with the condition~2 in the definition of a model of a constraint formula). Since $X\cup X_i$ is a model of~\eqref{eq:icompform} that GCSP has a solution. We use such a solution $\nu$ to construct level ranking $\lr$ 
for 
each atom $a$ in $X\setminus\iota$
as follows: 
$$
\lr(a)=\begin{cases}
lr_a^\nu \hbox{~~ if }lr_a\in \Sigma^{R(\Pi,\iota)}\\
0 \hbox{~~~~~ otherwise }\\
\end{cases}
$$

To verify that $\lr$ is indeed a level ranking
of $X$ for $\Pi$ relative to $\iota$ using $X_i$ we have to illustrate that for every atom $a\in X\setminus\iota$
there is $B\in Bodies(\Pi,a)$ such that $X$ satisfies $B$ and for every $b\in B^+\setminus \iota$ it holds that $\lr(a)-1\geq\lr(b)$.
Consider any atom $a\in X\setminus\iota$.
We are given that  $X\cup X_i$ is a model of 
$R(\Pi,\iota)$. Thus, the right hand side of the implication~\eqref{eq:lrr} is satisfied for chosen atom $a$. It follows that there is $B\in Bodies(\Pi,a)$ such that $X\models B$ and
for every $b\in B^+\setminus\iota$, $X_i$ contains the following irregular atom $|lr_a-1\geq lr_b|$. From the fact that $\nu$ is a solution to the GCSP that includes an integer linear constraint  $lr_a-1\geq lr_b$ it
follows that inequality $lr_a^\nu-1\geq lr_b^\nu$ holds. By $\lr$ construction we conclude that $\lr(a)-1\geq\lr(b)$.
 
 By Theorem~\ref{thm:casp-ans-iff-lr}, 
$X$ is an input answer set of $\Pi$ relative to $\iota$.
\end{proof}

\begin{theorem}\label{thm:aspt-set-iff-smt-lr-sat}
	Let $\cL$ be an integer lexicon and $\cB_\cL$ be the set of all integer linear constraints over $\cL$.
	For a CASP Program (or, an ASPT(IL) program) $P=\langle \Pi,\cB_\cL,\gamma\rangle$ over vocabulary $\sigma=\sigma_r\cup\sigma_i$, and a set~$X$ of atoms from  $\sigma$,
	$X$ is an  answer set of $P$
    if and only if 
    there is a model $X\cup X_i$ of a constraint formula (or SMT(IL formula)) 
  	$$\langle IComp(\Pi,\sigma_i)\wedge  R(\Pi,\sigma_i),\cB_\cL\cup\cB_{R(\Pi,\iota)},\gamma'\rangle $$  over $\sigma_r\cup\sigma_i\cup\sigma_i^{R(\Pi,\iota)}$ so that $X_i$ is the set of atoms over $\sigma_i^{R(\Pi,\iota)}$ and 		
    $\gamma'$ is such that it coincides with $\gamma$ on atoms in $\sigma_i$ and with $\gamma^{R(\Pi,\iota)}$ on atoms 
    from $\sigma_i^{R(\Pi,\iota)}$.
\end{theorem}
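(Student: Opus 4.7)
The plan is to reduce the statement to Theorem~\ref{thm:ians-set-iff-smt-lr-sat} by a simple decomposition of the combined GCSP into two independent pieces. First I would unfold the definition of an answer set of a CAS program: $X$ is an answer set of $P$ exactly when (i)~$X$ is an input answer set of $\Pi$ relative to $\sigma_i$, and (ii)~the GCSP
\[
G_1 \;=\; \{\gamma(a) \mid a\in X\cap\sigma_i\}\;\cup\;\{\neg\gamma(a) \mid a\in(\At(\Pi)\cap\sigma_i)\setminus X\}
\]
over $\cL$ has a solution. I would then apply Theorem~\ref{thm:ians-set-iff-smt-lr-sat} with $\iota=\sigma_i$ to replace (i) by an equivalent statement: there exists $X_i\subseteq\sigma_i^{R(\Pi,\sigma_i)}$ such that $X\cup X_i$ is a model of the constraint formula $\langle IComp(\Pi,\sigma_i)\wedge R(\Pi,\sigma_i),\cB_{R(\Pi,\sigma_i)},\gamma^{R(\Pi,\sigma_i)}\rangle$. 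Unfolding this in turn gives that $X\cup X_i\models IComp(\Pi,\sigma_i)\wedge R(\Pi,\sigma_i)$ and the GCSP
\[
G_2 \;=\; \{\gamma^{R(\Pi,\sigma_i)}(a) \mid a\in X_i\}\;\cup\;\{\neg\gamma^{R(\Pi,\sigma_i)}(a) \mid a\in\sigma_i^{R(\Pi,\sigma_i)}\setminus X_i\}
\]
over the integer lexicon used by $\cB_{R(\Pi,\sigma_i)}$ has a solution.

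Next I would identify the GCSP associated with the target constraint formula $\langle IComp(\Pi,\sigma_i)\wedge R(\Pi,\sigma_i),\cB_\cL\cup\cB_{R(\Pi,\sigma_i)},\gamma'\rangle$ for the candidate model $X\cup X_i$. Because $\gamma'$ agrees with $\gamma$ on $\sigma_i$ and with $\gamma^{R(\Pi,\sigma_i)}$ on $\sigma_i^{R(\Pi,\sigma_i)}$, and because $\sigma_i$ and $\sigma_i^{R(\Pi,\sigma_i)}$ are disjoint, this GCSP is literally $G_1\cup G_2$ over the combined integer lexicon. So the remaining task is to show that $G_1\cup G_2$ has a solution if and only if $G_1$ and $G_2$ do individually.

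The key technical point, and the only real obstacle, is this decomposition. It rests on the fact that the fresh level-ranking variables $lr_a$ introduced in $R(\Pi,\sigma_i)$ do not occur in any constraint of $\cB_\cL$, so the signatures underlying $G_1$ and $G_2$ share no variables. Hence any $[\var{\Sigma},\bZ]$ valuation $\nu_1$ solving $G_1$ and any valuation $\nu_2$ solving $G_2$ can be glued into a single valuation $\nu$ on the combined signature that assigns to each variable whichever of $\nu_1,\nu_2$ is defined on it (and an arbitrary integer otherwise); conversely the restrictions of any solution to the combined GCSP to the two signatures solve $G_1$ and $G_2$ separately. Since both lexicons are integer lexicons with the usual r- and f-denotations $\rho_\bZ$ and $\phi_\bZ$, the semantics of a constraint in $G_1$ (resp.\ $G_2$) depends only on the restriction of $\nu$ to $\var{\Sigma}$ (resp.\ $\Sigma^{R(\Pi,\sigma_i)}$), so satisfaction is preserved under both the gluing and the restriction.

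Putting these pieces together, the biconditional in the theorem is obtained by a short chain of equivalences: answer set of $P$ iff ((i) + solvability of $G_1$) iff (existence of $X_i$ with $X\cup X_i\models IComp\wedge R$ and $G_2$ solvable) + solvability of $G_1$) iff (existence of $X_i$ with $X\cup X_i\models IComp\wedge R$ and $G_1\cup G_2$ solvable) iff model of the combined constraint formula $\langle IComp(\Pi,\sigma_i)\wedge R(\Pi,\sigma_i),\cB_\cL\cup\cB_{R(\Pi,\sigma_i)},\gamma'\rangle$. I would write this as a proof that simply cites Theorem~\ref{thm:ians-set-iff-smt-lr-sat} and the decomposition lemma sketched above, and notes that by Theorem~\ref{thm:smt-cf} the resulting constraint formula can equivalently be read as the stated SMT(IL) formula.
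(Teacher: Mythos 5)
Your proof is correct, but it is organized differently from the paper's. The paper does not spell out a proof of this theorem at all: it only remarks that the argument ``follows the lines of'' the proof of Theorem~\ref{thm:ians-set-iff-smt-lr-sat}, i.e., it intends a direct redo of that construction (build the level ranking, build a valuation for the $lr_a$ variables, and additionally carry along the original constraints $\gamma(a)$ throughout). You instead invoke Theorem~\ref{thm:ians-set-iff-smt-lr-sat} as a black box to dispose of the input-answer-set condition, and isolate the genuinely new content in a single decomposition lemma: the GCSP attached to the combined formula is $G_1\cup G_2$, and since the level-ranking variables $lr_a$ are fresh and hence the underlying signatures of $G_1$ and $G_2$ are disjoint, solutions glue and restrict. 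This is exactly the fact the paper leaves implicit (even a direct proof ``along the lines of'' Theorem~\ref{thm:ians-set-iff-smt-lr-sat} must at some point merge a valuation for the original constraint variables with one for the $lr_a$ variables), so your version buys modularity and makes the one hidden assumption---freshness of the $lr_a$ and well-formedness of $\cB_\cL\cup\cB_{R(\Pi,\sigma_i)}$ as a constraint set over a single combined integer lexicon---explicit, at the cost of having to state and justify the gluing lemma separately. Your closing chain of equivalences is valid (the solvability of $G_1$ does not depend on $X_i$, so the conjunct moves freely inside the existential), and your normalization of the paper's stray $\iota$'s to $\sigma_i$ in the statement is the right reading.
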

	Proof of Theorem~\ref{thm:aspt-set-iff-smt-lr-sat} follows the lines of proof for  Theorem~\ref{thm:ians-set-iff-smt-lr-sat}.

Theorem~\ref{thm:aspt-set-iff-smt-lr-sat} paves the way for using SMT solvers for computing answer sets for arbitrary ASPT(IL) programs. \citeauthor{nie08}~\shortcite{nie08} introduced the notions of strong level ranking and also illustrated how strongly connected components of a dependency graph of a normal program can be used to enhance the transformation from a normal program to an SMT(DL) formula. Similar ideas could be used for enhancing the proposed translation from ASPT(IL) to SMT(IL) formalism. Such enhancements are of essence when implementation of SMT-based solver for nontight ASPT(IL) programs is considered. Implementing such enhancements is the direction of future work. 
 
\section{Conclusions}
In this paper we unified the terminology stemming from the fields of CASP and SMT solving. This unification helped us identify the special class of  uniform theories widely used in SMT practice. Given such theories, CASP and SMT solving share more in common than meets the eye. Based on this unification, we  open the doors for writing programs in the CASP formalism, while allowing SMT solving technologies to be utilized. In these settings, CASP can be seen as a possible general-purpose declarative  programming front-end for SMT technology.
In the future, we would like to investigate a similar link to a related formalism of HEX-programs~\cite{eit12}.
Overall, we expect this work to be a strong building block that will bolster the cross-fertilization between three different, even if related, automated reasoning communities: CASP, constraint (satisfaction processing) programming, and SMT.

\bibliographystyle{acmtrans}
\bibliography{abstractmods-bib}

\end{document}